\newtheorem{theorem}{Theorem}
\newtheorem{lemma}[theorem]{Lemma}
\newtheorem{corollary}[theorem]{Corollary}
\newtheorem{conj}[theorem]{Conjecture}
\theoremstyle{definition}
\newtheorem{defn}[theorem]{Definition}
\theoremstyle{definition}
\newtheorem{remark}[theorem]{Remark}
\numberwithin{theorem}{subsection}
\newcommand{\ba}{\begin{eqnarray}}
\newcommand{\ea}{\end{eqnarray}}
\newcommand{\nn}{\nonumber}
\title{Embedding and approximation theorems for echo state networks}
\author{
  Allen Hart  \\
  Department of Mathematical Sciences\\
  University of Bath\\
  Bath BA2 7AY, UK \\
  \texttt{a.hart@bath.ac.uk} \\
   \And
 James Hook \\
    Department of Mathematical Sciences\\
  University of Bath\\
  Bath BA2 7AY, UK \\
  \texttt{j.l.hook@bath.ac.uk} \\   % preferred Bath style
  \And
 Jonathan Dawes \\
    Department of Mathematical Sciences\\
  University of Bath\\
  Bath BA2 7AY, UK \\
  \texttt{j.h.p.dawes@bath.ac.uk} \\  % preferred Bath style
}
\begin{document}
\maketitle

\begin{abstract}
Echo State Networks (ESNs) are a class of single-layer recurrent neural networks that have enjoyed recent attention. In this paper we prove that a suitable ESN, trained on a series of measurements of an invertible dynamical system, induces a $C^1$ map from the dynamical system's phase space to the ESN's reservoir space. We call this the Echo State Map. We then prove that the Echo State Map is generically an embedding with positive probability.

Under additional mild assumptions, we further conjecture that the Echo State Map is almost surely an embedding. For sufficiently large, and specially structured, but still randomly generated ESNs, we prove that there exists a linear readout layer that allows the ESN to predict the next observation of a dynamical system arbitrarily well. Consequently, if the dynamical system under observation is structurally stable then the trained ESN will exhibit dynamics that are topologically conjugate to the future behaviour of the observed dynamical system.

Our theoretical results connect the theory of ESNs to the delay-embedding literature for dynamical systems, and are supported by numerical evidence from simulations of the traditional Lorenz equations. The simulations confirm that, from a one dimensional observation function, an ESN can accurately infer a range of geometric and topological features of the dynamics such as the eigenvalues of equilibrium points, Lyapunov exponents and homology groups.
\end{abstract}

\subsection*{Keywords}
Reservoir computing; liquid state machine; time series analysis; Lorenz equations; dynamical system; delay embedding; Persistent Homology; recurrent neural networks.

\newpage

\section{Introduction}

An Echo State Network (ESN) is a single-layer recurrent neural network composed of a trainable readout layer connected to a reservoir of randomly initialized, and randomly coupled, untrainable `neurons'. This architecture has been investigated and used by many authors since the seminal papers by \cite{Jaeger2001} and \cite{doi:10.1162/089976602760407955}. \cite{TANAKA2019100} present a review of ESNs, among other recurrent neural network models, under the umbrella term \emph{reservoir computing}.

The wide range of problems to which the ESN framework has been applied include speech recognition \citep{SKOWRONSKI2007414}, learning grammatical structure \citep{TONG2007424}, and financial time series prediction \citep{TimeSeries}, \citep{LIN20097313}. Several authors including \cite{BiologicalESN} have also discussed how the ESN is a plausible model for the information processing performed by biological neurons. Most ambitiously, \cite{10.1007/978-3-540-25940-4_14} discuss ESNs in the context of \emph{building by 2050, a team of fully autonomous
humanoid robots to beat the human winning team of the FIFA Soccer World Cup}.

The ESN has associated to it a \emph{reservoir state} denoted $r_k \in \mathbb{R}^n$ at time $k$. The structure of the recurrent layer is described by an $n\times n$ matrix $A$ that is the weighted adjacency matrix of the system of $n$ `neurons'. If neuron $i$ is not connected to neuron $j$ then $A_{ij} = 0$, and if they are connected with some weight $a \in \mathbb{R}$ then $A_{ij} = a$. Connections need not be symmetric, so in general $A_{ij} \neq A_{ji}$. Typically, $A$ is sparse and has approximately $1\%$ of its entries non-zero. Connection weights are usually i.i.d. random variables, and typically are chosen to be either uniformly distributed on a fixed interval, or Gaussian. The ESN also contains an $(n \times m)$ input matrix $W^{\text{in}}$, where $m$ is the dimension of the training data. Like the reservoir $A$, $W^{\text{in}}$ is also populated with i.i.d random variables. Finally, the ESN has an activation function $\varphi:\mathbb{R}^n \to \mathbb{R}^n$, for which there are several standard choices, for example $\text{tanh}$ (performed component-wise). 

The operation of the ESN is divided into two phases: an initial training phase, followed by an autonomous phase.
During the training phase, the ESN is trained on a given input time series denoted by vectors $u_0$, $u_1$, $u_2$ ... $u_K$ each in $\mathbb{R}^m$. We will assume in this paper that the input sequence is bounded, though we note the recent work of \cite{JMLR:v20:19-150} establishes a framework that encompasses unbounded input sequences as well. We will also assume in this paper that $m = 1$, so that we consider a scalar input time series. The \emph{reservoir state} at time $k$ is defined by choosing an initial state e.g. $r_1 = (0 , 0 , ... , 0)^{\top}$ and defining subsequent states recursively by
\begin{align}
    r_{k+1} = \varphi( Ar_k + W^{\text{in}}u_{k}). \nn
    % \label{ESN}
\end{align}
Having computed the the new reservoir states $r_1, r_2 ... r_K$, the output matrix $W^{\text{out}}$ is fitted to solve
the optimisation problem
\begin{align}
    \min_{W^{\text{out}}}\sum_{k=1}^{K} \lVert W^{\text{out}}r_{k} - a_{k} \rVert^2 + \lambda\lVert W^{\text{out}} \rVert^2_2, \nn
\end{align}
where $a_k$ is some known target sequence we want the ESN to mimic, often taken to be equal to the input sequence $u_k$, and $\lambda>0$ is a regularisation parameter. Minimisation problems of this kind are often referred to as ridge regression, or Tikhonov, or $L^2$ regularisation. Having trained the output matrix $W^{\text{out}}$ the reservoir states $s_{k}$ can then be liberated from their reliance on the driving input $u_k$ and evolve under the autonomous dynamical system defined by
\begin{align}
    &v_{k+1} = W^{\text{out}}s_k, \nonumber \\
    &s_{k+1} = \varphi( As_k + W^{\text{in}}v_{k+1}), \nonumber
\end{align}
where $s_0 = r_K$. If the training has been successful, then the trained ESN should provide good predictions of the future time series $v_1 \approx u_{K+1}, v_2 \approx u_{K+2}$, etc, and future evolution of the reservoir state $s_1 \approx r_{K+1}, s_2 \approx r_{K+2}$ etc. The viewpoint we take here clearly distinguishes between the training phase of the ESN where it is an externally-driven dynamical system, and the `test' phase where we consider it as an autonomous dynamical system in $\mathbb{R}^n$. 

In complete generality the process defining $u_k$ could be anything, including a realisation of a random process. However, importantly, throughout this paper, we will restrict our attention and assume that $u_0,u_1,u_2,\ldots$ are one dimensional observations of an invertible discrete-time dynamical system with evolution operator $\phi \in \text{Diff}^1(M)$ observed via a function $\omega \in C^1(M,\mathbb{R})$ on a compact manifold $M$. In particular $u_0 = \omega(x), u_1 = \omega \circ \phi(x), u_2 = \omega \circ \phi^{2}(x), u_3 = \omega \circ \phi^{3}(x)$, etc. The model we have in mind is that $\phi$ is the evolution operator for a time $\Delta t$ of a set of Lipschitz ordinary differential equations on $M$. Illustrations of the training and autonomous phases are shown in Figure \ref{ESN_fig}.

\begin{figure}
  \centering
  \begin{subfigure}{\textwidth}
  \caption{Training phase}
\begin{tikzpicture}
\pgfmathsetseed{1} 
\node[inner sep=0pt] (lorenz) at (-10,0)
    {\includegraphics[width=.4\textwidth]{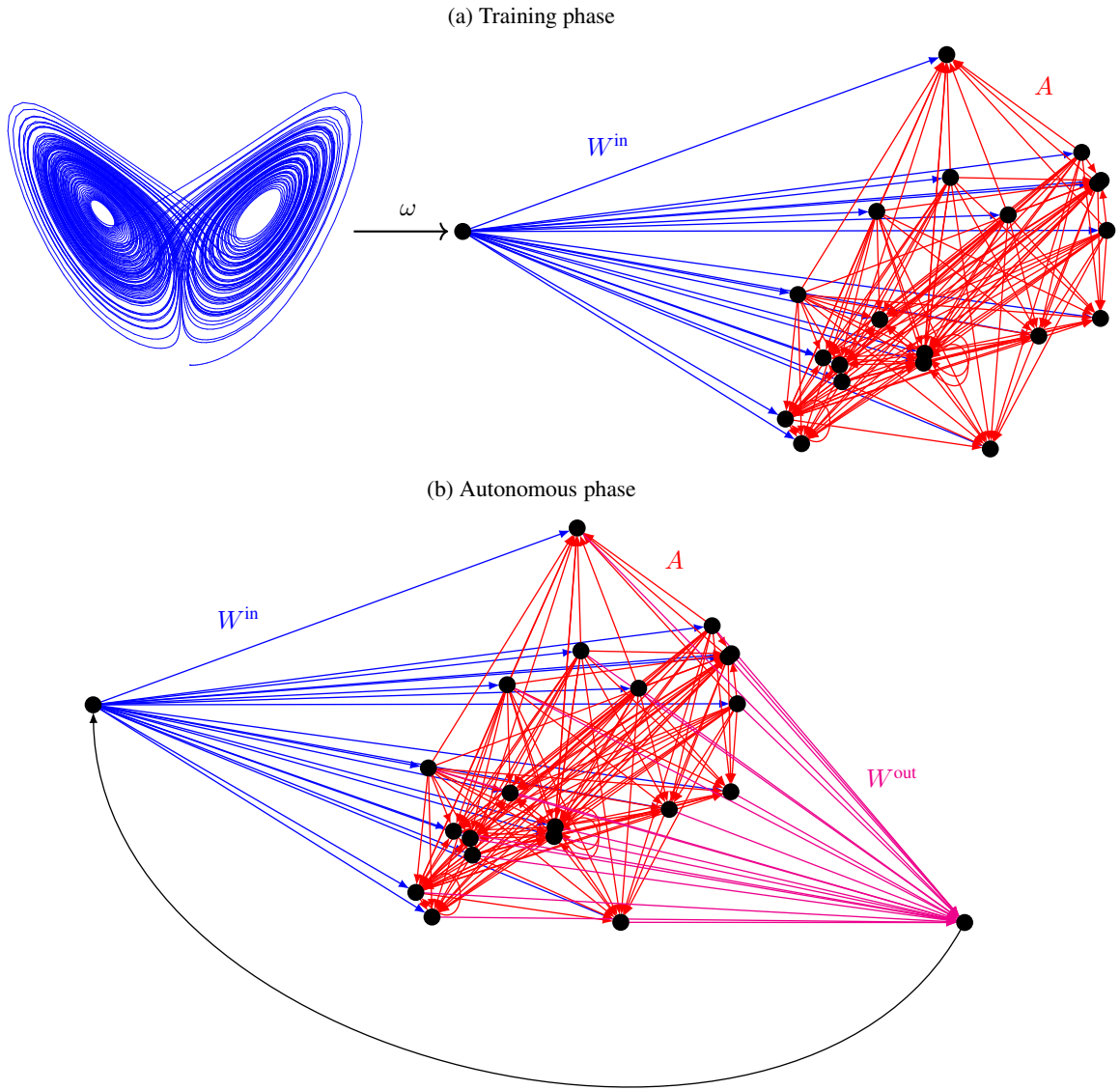}};
\Vertex[x=-6, y=0, style={color=black},size=.1]{0}
\foreach \x in {1,2,...,20}{
\Vertex[x=rand*3,y=rand*3,style={color=black},size=.1]{\x}}
\foreach \x in {1,2,...,20}{
\Edge[color = blue,Direct,lw=.5pt](0)(\x)}
\foreach \x in {1,2,...,11}{
\foreach \y in {9,10,...,20}{
\Edge[color = red,Direct,lw=.5pt](\x)(\y)}
}
\draw[->,thick, color=black] (-7.5,0) -- (-6.2,0);
\node[text=blue](Win) at (-4,1.2){$W^{\text{in}}$};
\node(omega) at (-6.75,0.3){$\omega$};
\node[text=red](A) at (2,2){$A$};
\end{tikzpicture}
    \label{Driven_ESN_fig}
\end{subfigure}
    
    \begin{subfigure}{\textwidth}
  \centering
  \caption{Autonomous phase}
\begin{tikzpicture}
\pgfmathsetseed{1} 
\Vertex[x=-6, y=0, style={color=black},size=.1]{0}
\foreach \x in {1,2,...,20}{
\Vertex[x=rand*3,y=rand*3,style={color=black},size=.1]{\x}}
\foreach \x in {1,2,...,20}{
\Edge[color = blue,Direct,lw=.5pt](0)(\x)}
\foreach \x in {1,2,...,11}{
\foreach \y in {9,10,...,20}{
\Edge[color = red,Direct,lw=.5pt](\x)(\y)}
}
\Vertex[x=6, y=-3, style={color=black},size=.1]{21}
\foreach \x in {1,2,...,20}{
\Edge[color = magenta,Direct,lw=.5pt](\x)(21)}
\Edge[color = black,Direct,lw=.5pt,bend=75](21)(0)
\node[text=blue](Win) at (-4,1.2){$W^{\text{in}}$};
\node[text=red](A) at (2,2){$A$};
\node[text=magenta](Wout) at (5,-1){$W^{\text{out}}$};
\end{tikzpicture}
    \label{Autonomous_ESN_fig}
\end{subfigure}
    \caption{(a) During the training phase the ESN observes a dynamical system via the function $\omega \in C^1(M,\mathbb{R})$; this sequence of observations is distributed into the nodes in the reservoir $r$ by the linear map $W^\text{in}$. (b) After training, in the autonomous phase, the driving is replaced by the output created by the best-fit linear map $W^{\text{out}}$. These images were produced using the TikZ-network package developed by \cite{TikZ_network_manual}.} 
\label{ESN_fig}
\end{figure}

The idea to draw training data from a dynamical system was by \cite{Jaeger78} who drew observations from a trajectory through the Mackey-Glass attractor. We were attracted to the idea by a recent paper by \cite{Pathak2017} who trained ESNs on the Lorenz equations and the Kuramoto--Sivashinsky equation (in one spatial dimension). In particular, we conjecture that under the right technical conditions an ESN with random reservoir matrix and input matrix trained on a one dimensional observation of a dynamical system will embed the system into the reservoir space almost surely. We call this the ESN Embedding Conjecture (Conjecture \ref{ESN_embedding_conjecture}). We believe this conjecture is true as a consequence of \cite{TakensThm} theorem stating that a generic delay observation map is an embedding. This connection between Takens' delay embedding theorem and the ESN was remarked on by \cite{Jaeger2001} and has been discussed in several later works including by \cite{1556081}, \cite{416607}, \cite{4118282}, \cite{5645205}, \cite{Lokse2017}, \cite{YEO2019671}, and \cite{Vlac_2019}. We go on to prove that our statement of the ESN Embedding Conjecture holds with probability $\alpha > 0$. 
We finally prove that when the ESN does successfully embed a structurally stable dynamical system into its reservoir, there exists a trainable readout layer such that the autonomous phase of the ESN will adopt the topology of the driving dynamical system. 
We call this the ESN Approximation Theorem (Theorem \ref{ESN_approximation_theorem}). This theorem complements the results of \cite{GRIGORYEVA2018495} and \cite{Gonon2020} stating that the ESN (with tunable and randomly initialised $A$ and $b$ respectively) is a universal approximator of discrete-time fading memory filters.

To demonstrate the theory we present numerical evidence that an ESN trained on a numerically integrated trajectory of the Lorenz system can replicate several of the Lorenz system's geometric and topological properties. In particular, we computed the Lyapunov exponents of the ESN autonomous phase and compared them to the known exponents of the Lorenz system. We also compared the eigenvalues of the system linearisation on the Lorenz system's fixed points to the eigenvalues of the linearisation on the fixed points belonging to the ESN autonomous phase. Finally we compared the homology of the driven and autonomous reservoir attractors to the Lorenz attractor using persistent homology. For the reader unfamiliar with persistent homology \cite{Ghrist2007} offers an excellent primer.

The remainder of the paper is set out as follows. In section 2 we present basic definitions and define a family of maps that captures the effect on the reservoir state of training with increasing amounts of data. In section~\ref{sec:esm} we prove that the family converges to a $C^1$ map that we call the Echo State Map. We conjecture in section~\ref{sec:embedding} that generically the Echo State Map is an embedding. In section~\ref{sec:approx} we prove an ESN Approximation Theorem that guarantees
that the autonomous dynamics of the ESN is (in a suitable sense) conjugate via a diffeomorphism to the original dynamical system on which the ESN was trained. in section~\ref{sec::numerical} we present numerical results supporting the theory. 

\section{Theory of ESNs}

Our analysis makes use of several different norms. In particular, if $x \in \mathbb{R}^m$ is a vector then $\lVert x \rVert$ is the Euclidean norm, and for $A$ a matrix then $\lVert A \rVert_2$ is the matrix $2$ norm. If $f$ is a real valued function, then $\lVert f \rVert_{\infty}$ will denote the supremum norm and if $f$ is continuously differentiable then we will use
the $C^1$ norm $\lVert f \rVert_{C^1}$ defined by
\begin{align}
    \lVert f \rVert_{C^1} := \lVert f \rVert_{\infty} + \lVert Df \rVert_{\infty} \nn %\label{C1_norm}
\end{align}
where $D$ is the derivative operator.

\subsection{The Echo State Network}
\label{sec:theory}
We begin our summary of the background to Echo State Networks (ESNs) with a definition.

\begin{defn} (Echo State Network)
    Let the activation function $\sigma$ be a function $\sigma \in C^1(\mathbb{R},(-1,1))$ that has its derivative take values in the range $(0,1)$.
    Let $n \in \mathbb{N}$, $A$ be a real $n \times n$ matrix, and $W^{\text{in}}$ a real $n \times 1$ matrix. Let $b_i \in \mathbb{R} \ \forall i \in \{ 1 , ... , n \}. $ Let $I_n := [-1,1]^n$ and define the function $\varphi : \mathbb{R}^n \to I_n$ component-wise by
    \begin{align}
        \varphi_i(r) = \sigma(r_i + b_i) \ \forall i \in \{1 , ... , n \}. \label{eqn:phi_i} 
    \end{align}
    We then define an Echo State Network (ESN) of size $n$ to be the triple $(\varphi, A , W^{\text{in}})$. 
\end{defn}
$\sigma$ is often chosen to be the hyperbolic function tanh, though other choices of activation function abound in the machine learning literature. The conditions on these functions are sometimes less restrictive than those imposed above on $\sigma$; other common choices of activation function include the linear unit (also known as the identity map) and the rectified linear unit (often abbreviated relu) defined by
\begin{align}
    \text{relu}(r_i) = 
    \begin{cases}
        r_i &\text{ if } r_i > 0 \\
        0 &\text{ otherwise. } \nn
    \end{cases}
\end{align}
\cite{pmlr-v15-glorot11a} discuss how Recurrent Neural Networks supported by a relu activation function are less prone to the \emph{vanishing gradient problem} than sigmoidal activation functions.
More exotic activation functions include radial basis functions, which take the shape of bell curves. Throughout this paper however, we will restrict ourselves to activation functions as defined above, i.e. functions $\sigma \in C^1(\mathbb{R},(-1,1))$ who's derivatives take values in $(0,1)$.

\subsection{The Echo State Map}
\label{sec:esm}

% The definitions and theory presented so far hold for an arbitrary bounded input sequence $u_k$ that could come from a stochastic time series, a dynamical system or any other exotic source.
% For the rest of this paper, we will specialise to the case of an input sequence being a time series of one dimensional observations of a dynamical system $\phi \in \text{Diff}^1(M)$ living on a compact manifold $M$. 
We will begin by introducing a family of functions that describe the mapping between this time series of observations and the reservoir state; this will be of fundamental importance throughout the remainder of the paper.

\begin{defn}
    (Echo State Family)
   Let $M$ be a compact $m$-manifold and $n \in \mathbb{N}$. Let $A$ be an $n \times n$, and let $W^{\text{in}}$ an $n \times 1$ matrix: let the triple $(\varphi,A,W^{\text{in}})$ be an ESN. Let the discrete dynamical system be $\phi \in \text{Diff}^1(M)$ and let the observation function $\omega \in C^1(M,\mathbb{R})$. Let the family of functions $F = \{f^{r_0}_k : M \to I_n : r_0 \in I_n, \ k \in \mathbb{N}_0 \}$ be defined as follows:
\begin{align}
    f^{r_0}_0(x) &= r_0 \nonumber \\
    f^{r_0}_{k+1}(x) &= \varphi(A f^{r_0}_k \circ \phi^{-1}(x) + W^{\text{in}}\omega(x) ). \nn
\end{align}
We call the set of functions $F$ the \emph{Echo State Family}. \label{def:ESF}
\end{defn}

To provide some intuition as to where this family came from, we observe that $f^{r_0}_{k}$ is the function that takes a point $x \in M$ and first applies the inverse evolution operator $k$ times, yielding the past state $\phi^{-k}(x)$ of the dynamical system. A list of $k+1$ observations $ \omega \circ \phi^{-k}(x) , \ \omega \circ \phi^{1-k}(x) , \ \omega \circ \phi^{2-k}(x) , \ ... $ are then obtained, in sequence, forward from this point. An ESN with initial reservoir state $r_0$ is trained on this list of inputs, and its reservoir state is then exactly given by the value of $f^{r_0}_{k}(x)$. The function $f^{r_0}_{k}$ is therefore the map induced by $k+1$ steps of the training phase of the ESN, i.e. it sends a point $x \in M$ to reservoir space $I_n$ according to its one dimensional history. %Figure \ref{Image_of_f} illustrates members of the Echo State Family pictorially. 
Our plan for the upcoming section is to show that for any $r_0 \in I_n$
\begin{align}
    f^{r_0} := \lim_{k \to \infty} f^{r_0}_{k}
    \nonumber
\end{align}
exists, and that $f^{r_0} = f^{s_0} =: f$ for any $r_0,s_0 \in I_n$. We will call $f$ the Echo State Map, and show it is continuously differentiable, i.e. $f \in C^1(M,\mathbb{R}^n)$. These results will appear together and called the Echo State Mapping Theorem. Equivalently, we could say the Echo State Map $f$ is
the unique $C^1$ generalized synchronisation (in the sense described by \cite{PhysRevLett.76.1816} ) between a pair of unidirectionally coupled systems, the dynamics given by $\phi$ and the driven ESN phase. 

We will further conjecture that $f$ is a $C^1$ embedding almost surely, and therefore (almost surely) it is a topology-preserving map from the manifold $M$ to the reservoir space $I_n$. We will call this the ESN Embedding Conjecture, and go on to prove a partial result that $f$ is a $C^1$ embedding with positive probability.

\begin{theorem}
        \label{c1_theorem}
        %Let $M$ be a compact $m$-manifold and $n \in \mathbb{N}$. Let $A$ be an $n \times n$ matrix with $\lVert A \rVert_2 =: \rho < 1$, and $W^{\text{in}}$ a $n \times 1$ matrix, and let the triple $(\varphi,A,W^{\text{in}})$ be an ESN. Let $\phi \in \text{Diff}^1(M)$, let $\omega \in C^1(M,\mathbb{R})$, and let $F$ be the Echo State Family. 
        (Echo State Mapping Theorem) With the notation and hypotheses of Definition~\ref{def:ESF}, and the further assumption that $\lVert A \rVert_2 < \min(1,1 / \lVert D\phi^{-1} \rVert_{\infty})$,
        there exists a unique solution $f \in C^1(M,\mathbb{R}^n)$ of the equation
        \begin{align}
            f = \varphi (A f \circ \phi^{-1} + W^{\text{in}}\omega ) \nn
           % \label{recursion_eqn}
        \end{align}
        \label{recursion_relation}
        such that for all $r_0 \in I_n$ the sequence $f^{r_0}_k$ converges  in the $C^1$ topology to $f$ as $k \to \infty$. We call $f$ the Echo State Map.
    \end{theorem}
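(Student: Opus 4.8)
The plan is to recognise the recursion defining the Echo State Family as the iteration of a single nonlinear operator and to extract $f$ as its fixed point. Define $T$ on $C^0(M,I_n)$ by
\[
(Tg)(x) = \varphi\bigl(A\, g(\phi^{-1}(x)) + W^{\text{in}}\omega(x)\bigr),
\]
so that $f^{r_0}_{k} = T^k c_{r_0}$, where $c_{r_0}$ is the constant function equal to $r_0$. Since $C^0(M,I_n)$ is complete in the supremum norm, I would first establish $C^0$ convergence by showing $T$ is a contraction. Because $\varphi$ acts componentwise through $\sigma$ and $\|\sigma'\|_\infty \le 1$, the map $\varphi$ is Lipschitz with constant $\|\sigma'\|_\infty$ in the Euclidean norm, and precomposition with the bijection $\phi^{-1}$ preserves the supremum norm; hence
\[
\|Tg - Th\|_\infty \le \|\sigma'\|_\infty\, \|A\|_2\, \|g-h\|_\infty,
\]
with Lipschitz constant $\|\sigma'\|_\infty\|A\|_2 < 1$ by the hypothesis $\|A\|_2<1$. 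The Banach fixed point theorem then yields a unique continuous fixed point $f$ and the uniform convergence $f^{r_0}_k \to f$ for every $r_0$.

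Upgrading to the $C^1$ statement is the heart of the argument, and here the obstacle is that $\sigma$ is only assumed $C^1$. Differentiating the recursion by the chain rule gives, with $\Lambda_g(x) := D\varphi\bigl(A g(\phi^{-1}(x)) + W^{\text{in}}\omega(x)\bigr)$ a diagonal matrix of values of $\sigma'$,
\[
D(Tg)(x) = \Lambda_g(x)\bigl[A\, Dg(\phi^{-1}(x))\, D\phi^{-1}(x) + W^{\text{in}} D\omega(x)\bigr].
\]
A direct attempt to show $T$ is a contraction in the $C^1$ norm founders on the term $\Lambda_g - \Lambda_h$: bounding it by $\|g-h\|_\infty$ would require $\sigma'$ to be Lipschitz, i.e. $\sigma\in C^2$, which we do not have. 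I would therefore avoid differentiating $T$ directly and instead invoke the fiber contraction theorem of Hirsch and Pugh. Consider the skew-product operator on $C^0(M,I_n)\times C^0(M,\mathbb{R}^{n\times m})$,
\[
\tilde T(g,G) = \Bigl(Tg,\; \Lambda_g\bigl[A\, (G\circ\phi^{-1})\, D\phi^{-1} + W^{\text{in}}D\omega\bigr]\Bigr),
\]
whose base component $T$ is the contraction just established and whose fiber component is, for each fixed $g$, an affine contraction in $G$ with uniform factor $\|\sigma'\|_\infty\|A\|_2\|D\phi^{-1}\|_\infty \le \|A\|_2\|D\phi^{-1}\|_\infty < 1$, using the second clause $\|A\|_2 < 1/\|D\phi^{-1}\|_\infty$ of the hypothesis. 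Continuity of $\tilde T$ in the product sup-norm follows from the continuity of the (linear) dependence on $G$ and of $g \mapsto \Lambda_g$, the latter using that $\sigma'$ is uniformly continuous on the compact set in which all pre-activations $A g(\phi^{-1}(x)) + W^{\text{in}}\omega(x)$ lie; here compactness of $M$ and boundedness of $\omega$ and $I_n$ enter. The fiber contraction theorem then furnishes a globally attracting fixed point $(f,G^{\ast})$ of $\tilde T$.

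To finish, I would identify the orbit of $\tilde T$ started at $(c_{r_0},0)$ with $(f^{r_0}_k, Df^{r_0}_k)$: indeed $Df^{r_0}_0 = 0$ and the chain rule shows the derivative recursion is exactly the fiber component of $\tilde T$. Global attraction therefore gives $f^{r_0}_k \to f$ and $Df^{r_0}_k \to G^{\ast}$ uniformly. By the classical theorem that a uniform limit of $C^1$ maps whose derivatives also converge uniformly is $C^1$ with derivative the limit, $f\in C^1(M,\mathbb{R}^n)$ with $Df = G^{\ast}$, and the two uniform convergences combine into convergence in the $C^1$ norm. Passing to the limit in $f^{r_0}_{k+1} = T f^{r_0}_k$ using continuity of $T$ yields the fixed point equation $f = \varphi(A f\circ\phi^{-1} + W^{\text{in}}\omega)$, and uniqueness is inherited from the $C^0$ contraction of the first step, since any $C^1$ solution is a fortiori a continuous fixed point of $T$. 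The single genuinely delicate point is the $C^1$ upgrade, and the fiber contraction theorem is precisely the tool that bypasses the missing $C^2$ regularity of the activation.
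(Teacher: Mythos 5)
Your proof is correct, and it takes a genuinely different route from the paper's --- one that is in fact more careful on the key analytic point. The paper runs the Banach fixed point theorem directly on $C^1(M,\mathbb{R}^n)$ with the operator $\tilde\Psi(f)=\varphi(Af\circ\phi^{-1}+W^{\text{in}}\omega)$, and the first inequality in its contraction estimate is the bare assertion that ``$\varphi$ is contracting in $C^1$'', i.e.\ $\lVert \varphi\circ u-\varphi\circ v\rVert_{C^1}\le\lVert u-v\rVert_{C^1}$. That is precisely the step your proposal flags as the obstacle: the derivative part of that norm contains the cross term $(D\varphi(u)-D\varphi(v))Dv$, which cannot be bounded by $\lVert u-v\rVert_{C^1}$ without Lipschitz control of $\sigma'$ (not available for $\sigma\in C^1$), and even with such control the bound carries a factor $\lVert Dv\rVert_\infty$, which is unbounded as $v=Ag\circ\phi^{-1}+W^{\text{in}}\omega$ ranges over images of all $g\in C^1(M,\mathbb{R}^n)$ (take $u=v+\mathrm{const}$, so that $Du=Dv$ and only the cross term survives, with $\lVert Dv\rVert_\infty$ large); so the paper's operator is not shown to be a contraction in the $C^1$ norm, and in general it is not one. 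Your two-stage argument --- Banach contraction in $C^0$ (driven by $\lVert A\rVert_2<1$), then the Hirsch--Pugh fiber contraction theorem for the derivative recursion (driven by $\lVert A\rVert_2\lVert D\phi^{-1}\rVert_\infty<1$), then the classical theorem that uniform convergence of maps together with uniform convergence of their derivatives yields a $C^1$ limit --- uses only the stated $C^1$ regularity of $\sigma$ and delivers the same conclusions: existence, uniqueness (inherited from the $C^0$ fixed point, since any $C^1$ solution is a continuous fixed point), and $C^1$ convergence for every initial $r_0$. What the paper's one-step approach would buy, were its key estimate valid, is brevity; what yours buys is a proof that actually works under the stated hypotheses, and it is the standard technique for establishing differentiable generalized synchronization. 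The only point to tidy in your write-up is bookkeeping: $Df$ is a section of the bundle $L(TM,\mathbb{R}^n)$ rather than a map into a fixed matrix space $\mathbb{R}^{n\times m}$, so the fiber space should be the continuous sections of that bundle with the sup norm induced by a Riemannian metric (or work in charts, or via an embedding of $M$ into some $\mathbb{R}^N$); completeness, the uniform fiber contraction constant, and the continuity of $g\mapsto\Lambda_g$ via uniform continuity of $\sigma'$ on the compact set of pre-activations all go through verbatim.
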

    
    \begin{proof}
        Let $\tilde\Psi:C^1(M,\mathbb{R}^n) \to C^1(M,\mathbb{R}^n)$ be defined by
        \begin{align}
            \tilde\Psi(f) = \varphi (A f \circ \phi^{-1} + W^{\text{in}}\omega) \nn
        \end{align}
        then we can see that 
        \begin{align}
            f^{r_0}_k = \tilde\Psi(f^{r_0}_{k-1})
            = \tilde\Psi^k(f^{r_0}_0). \nn
        \end{align}
        Now, we will show that $\tilde\Psi$ is a contraction mapping and therefore has a unique fixed point $f \in C^1(M,\mathbb{R}^n)$ by the contraction mapping theorem \citep{Banach1922}. This will complete the proof. 
        \ba
            \lVert \tilde\Psi(f) - \tilde\Psi(g) \rVert_{C^1} & = & \lVert \varphi(Af \circ \phi^{-1} + W^{\text{in}}\omega) - \varphi(Ag \circ \phi^{-1} + W^{\text{in}}\omega) \rVert_{C^1} \nn \\
            & \leq & \lVert Af \circ \phi^{-1} + W^{\text{in}}\omega - Af\circ\phi^{-1} - W^{\text{in}}\omega \rVert_{C^1} \text{ because $\varphi$ is contracting in $C^1$} \nn \\
            & = & \lVert A(f \circ \phi^{-1} - g \circ \phi^{-1}) \rVert_{C^1} \nn \\
            & \leq & \lVert A \rVert_2 \lVert f \circ \phi^{-1} - g \circ \phi^{-1} \rVert_{C^1} \nn \\
            & = & \lVert A \rVert_2( \lVert f \circ \phi^{-1} - g \circ \phi^{-1} \rVert_{\infty} + \lVert Df \circ \phi^{-1} D \phi^{-1} - Dg \circ \phi^{-1} D\phi^{-1} \rVert_{\infty}  ) \nn \\
            & \leq & \lVert A \rVert_2( \lVert f \circ \phi^{-1} - g \circ \phi^{-1} \rVert_{\infty} + \lVert D\phi^{-1} \rVert_{\infty} \lVert Df \circ \phi^{-1}  - Dg \circ \phi^{-1}  \rVert_{\infty}  ) \nn \\
            & \leq & \lVert A \rVert_2 \max(1,\lVert D\phi^{-1} \rVert_{\infty}) \lVert f - g \rVert_{C^1} \nn
        \ea 
        and $\lVert A \rVert_2 \max(1,\lVert D\phi^{-1} \rVert_{\infty}) < 1$, so we have that $\tilde{\Psi}$ is contracting.
    \end{proof}
     
We remark here that if $\phi$ is obtained by the discretisation of a continuous time flow with a small time step, the evolution operator $\phi$ is close to the identity map, so $\lVert D\phi^{-1} \rVert_{\infty}$ is close to 1. Consequently, the condition 
$\lVert A \rVert_2 < \min(1,1 / \lVert D\phi^{-1} \rVert_{\infty})$
is not much more restrictive than enforcing $\lVert A \rVert_2 < 1$.

\subsection{The ESN Embedding Theorem}
     \label{sec:embedding}

    % The Echo State Mapping Theorem (Theorem \ref{c1_theorem}) proved in detail in the previous section intuitively comes as a consequence of the Echo State Property, and provides reassurance that the process of evolving the reservoir in its driven, or `training', mode generates an approximation to an underlying well-defined map from the original dynamical system $\phi$ to the reservoir. 
    In this section we will discuss the conditions under which the Echo State Map $f \in C^1(M,\mathbb{R}^n)$ is a $C^1$ embedding (i.e. an injective immersion whose domain and image are diffeomorphic). We will also conjecture that for a generic observation function $\omega$ and random matrices $A$ and $W^{\text{in}}$, the Echo State Map $f$ is a $C^1$ embedding almost surely. To set the scene for these results, we first recall Whitney's Weak Embedding Theorem and Takens' Theorem for delay observation maps.

    \begin{theorem}
    (Whitney's Weak Embedding Theorem) Let $M$ be a compact $m$-manifold and choose $n \in \mathbb{N}$ such that $n > 2m$. Then the set of $C^r$ embeddings is generic in $C^r(M,\mathbb{R}^n)$ with respect to the Whitney $C^1$ topology (This is the topology on $C^1(M,\mathbb{R}^n)$ induced by the $C^1$-norm). 
    \label{WWET}
\end{theorem}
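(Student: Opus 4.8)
The plan is to reduce the statement to two transversality arguments. Since $M$ is compact, a $C^r$ map $f : M \to \mathbb{R}^n$ is an embedding precisely when it is an injective immersion: a continuous injection from a compact space into a Hausdorff space is a homeomorphism onto its image, so an injective immersion automatically has image diffeomorphic to $M$. Moreover, with $M$ compact the space $C^r(M,\mathbb{R}^n)$ is a Baire space in the $C^1$ topology, so it suffices to exhibit the set of injective immersions as a residual (or open-dense) set. I would therefore prove separately that (i) immersions are generic and (ii) injective maps are generic, and then intersect the two families.

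For (i) I would invoke Thom's jet transversality theorem applied to the $1$-jet extension $j^1 f : M \to J^1(M,\mathbb{R}^n)$. The non-immersion locus in the fibre consists of those jets whose linear part $Df$ has rank strictly less than $m$; stratifying the space of $n\times m$ matrices by corank, the stratum of corank $i$ has codimension $i(n-m+i)$, so the smallest codimension that occurs is $n-m+1$. By jet transversality a generic $f$ has $j^1 f$ transverse to every stratum, and transversality to a submanifold of codimension $c > \dim M = m$ forces the preimage to be empty. Since $n > 2m$ gives $n-m+1 > m$, the generic $f$ misses the entire singular set and is an immersion. Being an immersion is moreover an open condition in the $C^1$ topology, so immersions form an open dense (hence residual) set.

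For (ii) I would study the difference map $\delta_f : (M\times M)\setminus \Delta \to \mathbb{R}^n$ defined by $\delta_f(x,y) = f(x) - f(y)$, whose injectivity off the diagonal amounts to $\delta_f$ avoiding $0 \in \mathbb{R}^n$. The target point $0$ has codimension $n$ while the source has dimension $2m$; a parametric (multijet) transversality argument shows that for generic $f$ the map $\delta_f$ is transverse to $\{0\}$, and because $n > 2m$ this transversality forces $\delta_f$ to miss $0$, ruling out double points bounded away from the diagonal. Near the diagonal the immersion property from (i) supplies local injectivity, and compactness of $M$ upgrades this to a uniform radius $\delta > 0$ such that $f$ is injective on every $\delta$-ball; covering $\{(x,y) : d(x,y) \ge \delta\}$ by countably many pieces then lets me write the injective immersions as a countable intersection of open dense sets. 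Intersecting with the residual set from (i) yields that injective immersions, equivalently embeddings, are residual, hence generic, and nonempty.

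The main obstacle is the injectivity argument near the diagonal: the naive difference map vanishes identically on $\Delta$, so ordinary transversality cannot be applied there directly. Making this rigorous requires either blowing up the diagonal or invoking the Thom--Mather multijet transversality theorem for the $2$-fold multijet space, combined with the uniform local-injectivity estimate furnished by compactness and the immersion condition. Assembling the resulting family of open dense conditions into a single residual set, and checking that the codimension bookkeeping ($n-m+1 > m$ for immersions and $n > 2m$ for double points) is exactly what the hypothesis $n > 2m$ guarantees, is where the care is needed.
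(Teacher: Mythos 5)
The paper does not prove this statement at all: it is a classical result and the paper's ``proof'' is simply a citation of Whitney's original work. Your proposal, by contrast, is a genuine proof sketch, and it follows the standard textbook route (as in Hirsch's \emph{Differential Topology} or Golubitsky--Guillemin): Thom jet transversality with corank stratification for genericity of immersions, and two-fold multijet transversality applied to the difference map, combined with uniform local injectivity near the diagonal, for genericity of injectivity. The codimension bookkeeping is right: the corank-$i$ stratum has codimension $i(n-m+i)$, so immersivity needs only $n-m+1>m$ (i.e.\ $n\geq 2m$), while excluding double points needs $n>2m$; and your identification of the near-diagonal degeneration of $\delta_f$ as the place requiring multijet (rather than ordinary) transversality is exactly the correct technical point. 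Since the paper contributes nothing here beyond the citation, there is no conflict of methods --- your argument is simply the proof that the cited reference supplies.

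One soft spot deserves attention. You assert that $C^r(M,\mathbb{R}^n)$ with the $C^1$ topology is a Baire space; for $r\geq 2$ this is not obvious, since that space is an incomplete normed space (its completion is $C^1(M,\mathbb{R}^n)$), and a dense subspace of a Baire space need not be Baire. Fortunately the Baire machinery is dispensable: because $M$ is compact, an injective immersion is automatically an embedding and the set of injective immersions is \emph{open} in the $C^1$ topology (a $C^1$-small perturbation of an injective immersion of a compact manifold remains one), so it suffices to prove openness plus density rather than residuality, and your transversality arguments deliver density directly. Phrased this way, the countable-intersection assembly at the end of your argument is unnecessary, and the conclusion is in fact stronger than ``residual'': the $C^r$ embeddings form an open dense subset of $C^r(M,\mathbb{R}^n)$ in the $C^1$ topology, which is the form of genericity the rest of the paper actually uses (e.g.\ in Lemma \ref{open_lemma} and the remark following it).
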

\begin{proof}
    \cite{whitney_embedding_thm}.
\end{proof}
    \begin{corollary}
        Let $M$ be a compact $m$-manifold and $n \in \mathbb{N}$ such that $n > 2m$. Let $A$ be an $n \times n$ matrix for which $\lVert A \rVert_2 < \text{min}(1/\lVert D\phi^{-1} \rVert_{\infty}, 1)$. Let $W^{\text{in}}$ be an $n \times 1$ matrix, and let the triple $(\varphi,A,W^{\text{in}})$ be an ESN. As usual, let $\phi \in \text{Diff}^1(M)$, and $\omega \in C^1(M,\mathbb{R})$. If $n > 2m$, then the ESM $f \in C^1(M,\mathbb{R}^n)$ is a limit point in the Whitney $C^1$ topology of the set of $C^1$ embeddings.
        \label{WeakESN}
    \end{corollary}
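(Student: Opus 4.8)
The plan is to combine the existence of the Echo State Map with the genericity of embeddings supplied by Whitney's theorem, and then to convert genericity into the limit-point statement by elementary point-set topology of the ambient function space. First I would invoke the Echo State Mapping Theorem (Theorem~\ref{c1_theorem}): under the hypothesis $\lVert A \rVert_2 < \min(1/\lVert D\phi^{-1} \rVert_{\infty}, 1)$ it guarantees that the Echo State Map $f$ is a well-defined element of $C^1(M,\mathbb{R}^n)$, so that $f$ lives inside the space to which Whitney's theorem applies. Since $n > 2m$, Whitney's Weak Embedding Theorem (Theorem~\ref{WWET}) then tells me that the set $E$ of $C^1$ embeddings is generic in $C^1(M,\mathbb{R}^n)$ with respect to the Whitney $C^1$ topology. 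In particular $E$ is dense: every nonempty open subset of $C^1(M,\mathbb{R}^n)$ meets $E$.

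The remaining step is to upgrade ``$E$ is dense'' to ``$f$ is a limit point of $E$''. Here I would use two structural features of the ambient space. The space $C^1(M,\mathbb{R}^n)$, equipped with the $C^1$ norm, is a nontrivial normed vector space; hence it is $T_1$, so singletons are closed, and it has no isolated points, so every point is a limit point of the whole space. Given any open neighbourhood $U$ of $f$, the set $U \setminus \{f\}$ is therefore open and nonempty, and by density of $E$ it must contain a point of $E$. That point is an embedding distinct from $f$ and lying in $U$; since $U$ was arbitrary, $f$ is a limit point of $E$, which is exactly the claim.

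I do not anticipate a serious obstacle, since the heavy lifting is done by Whitney's theorem; the only point requiring care is the logical gap between genericity (which yields density) and the limit-point conclusion. The delicate case is the one in which $f$ happens itself to be an embedding, for then density alone does not obviously force \emph{other} embeddings nearby. The no-isolated-points property of the normed space $C^1(M,\mathbb{R}^n)$ closes this gap cleanly, as above. Alternatively, one can observe that $E$ is open in the Whitney $C^1$ topology for compact $M$, so if $f \in E$ an entire neighbourhood of embeddings surrounds it, again giving the limit-point conclusion.
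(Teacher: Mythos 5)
Your proposal is correct and takes essentially the same route as the paper: Theorem~\ref{c1_theorem} places $f$ in $C^1(M,\mathbb{R}^n)$, and Whitney's theorem (Theorem~\ref{WWET}) supplies genericity of embeddings in that space when $n > 2m$. The paper compresses the passage from genericity to the limit-point conclusion into a single sentence, so your density-plus-no-isolated-points argument simply makes explicit a point-set step the paper leaves implicit.
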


    \begin{proof}
        $f \in C^1(M,\mathbb{R}^n)$ by Theorem \ref{c1_theorem} so, by the Weak Whitney Embedding Theorem (Theorem~\ref{WWET}), $f$ is a limit point of the $C^1$ embeddings with respect to the Whitney $C^1$ topology.
    \end{proof}
    From Corollary \ref{WeakESN} it is clear that the Echo State Map $f$ is always close to an embedding, but this says nothing about necessary or sufficient conditions for $f$ to actually \emph{be} an embedding. In fact $f$ may never actually be an embedding. That said, since embeddings are generic in the space $C^1(M,\mathbb{R}^n)$ we expect heuristically that a function in $C^1(M,\mathbb{R}^n)$ that is assembled without explicitly desiring that it is not an embedding, is overwhelmingly likely actually to be an embedding. This suggests (heuristically) that a generic Echo State Map $f$ is indeed an embedding. The first step we take toward proving this is to introduce Takens' Theorem.

    \begin{theorem}
    (Huke's Formulation of Takens' Theorem) Let $M$ be a compact manifold of dimension $m$. Suppose $\phi \in \text{Diff}^2(M)$ has the following two properties:
    \begin{itemize}
    \item[(1)] $\phi$ has only finitely many periodic points with periods
    less than or equal to $2m$.
    \item[(2)] If $x \in M$ is any periodic point with period $k < 2m$ then the eigenvalues of the derivative $D \phi^k$ at $x$ are distinct.
    \end{itemize}
    Then for a generic $C^2$ observation function $\omega \in C^2(M,\mathbb{R})$ the $(2m+1)$ delay observation map $\Phi_{(\phi,\omega)}:M \to \mathbb{R}^{2m+1}$ defined by
\begin{align}
    \Phi_{(\phi,\omega)}(x) := ( \omega(x) , \omega \circ \phi (x) , \omega \circ \phi^2 (x) , ... , \omega \circ \phi^{2m}(x) ) \nn 
\end{align}

is a $C^1$ embedding.
\end{theorem}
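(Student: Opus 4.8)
The plan is to establish the two defining properties of an embedding — that $\Phi_{(\phi,\omega)}$ is an immersion and that it is injective — each for a residual set of observation functions $\omega$, and then to invoke the standard fact that an injective immersion of a \emph{compact} manifold into a Hausdorff space is automatically a $C^1$ embedding. Since $M$ is compact, $C^2(M,\mathbb{R})$ with the Whitney $C^2$ topology is a Banach space, hence a Baire space, and a countable intersection of residual subsets is again residual; so it suffices to treat immersivity and injectivity separately and then intersect. The engine throughout will be the parametric transversality (transversal density) theorem together with Sard's theorem.

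For injectivity I would introduce the difference map
\begin{align}
    G\colon (M\times M\setminus\Delta)\times C^2(M,\mathbb{R})\to\mathbb{R}^{2m+1},\qquad G((x,y),\omega)=\Phi_{(\phi,\omega)}(x)-\Phi_{(\phi,\omega)}(y), \nn
\end{align}
where $\Delta$ is the diagonal. The domain $M\times M\setminus\Delta$ has dimension $2m$ while the target has dimension $2m+1$, so if the slice $G(\cdot,\omega)$ is transverse to $\{0\}$ then by the dimension count $2m-(2m+1)<0$ its zero set is empty; this says precisely that $\Phi_{(\phi,\omega)}$ separates distinct points. The parametric transversality theorem reduces the matter to checking that the total map $G$ is a submersion at each of its zeros. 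Away from periodic orbits the $2m+1$ orbit points $\phi^j x,\phi^j y$ for $0\le j\le 2m$ can be arranged to be mutually distinct, so independently perturbing the values of $\omega$ in small neighbourhoods of each of them surjects onto $\mathbb{R}^{2m+1}$, giving the required transversality and hence a residual set of injective $\omega$. The immersivity argument is parallel: one works with the unit tangent bundle $T^1M$ (dimension $2m-1$) and the map $H((x,v),\omega)=D\Phi_{(\phi,\omega)}|_x v$ on $T^1M\times C^2(M,\mathbb{R})$; here $2m-1<2m+1$, so transversality of the slice to $\{0\}$ forces its zero set empty, meaning $D\Phi|_x$ has trivial kernel everywhere.

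The genuinely delicate part, and the reason the two hypotheses on $\phi$ are imposed, is the behaviour on periodic orbits. If $x$ has period $p\le 2m$ then the orbit $\{x,\phi x,\dots,\phi^{2m}x\}$ contains only $p$ distinct points, the delay coordinates repeat, and the naive independent perturbation of $\omega$ no longer controls all $2m+1$ components of $G$ (or $H$); the submersion step above then breaks down. Hypothesis (1), that there are only finitely many periodic points of period at most $2m$, lets me excise these orbits and treat them one at a time, while hypothesis (2), that $D\phi^k$ has distinct eigenvalues at each such point, supplies an eigenbasis of $T_xM$ whose images under the repeated-coordinate delay map remain linearly independent, so that a local choice of $\omega$ together with its first derivatives still renders $\Phi$ an immersion and locally injective along the orbit. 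I expect this periodic-point analysis to be the main obstacle: it requires a direct linear-algebra computation on each periodic orbit rather than a soft transversality argument, and it is exactly here that the spectral hypothesis is essential. Combining the transversality result on the complement of the periodic orbits with this direct treatment, and intersecting the two resulting residual sets, yields a generic $\omega$ for which $\Phi_{(\phi,\omega)}$ is a $C^1$ embedding.
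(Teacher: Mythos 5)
You should know at the outset that the paper does not actually prove this theorem: its ``proof'' is a citation to Huke, whose argument establishes that the set of good observation functions is open (easy) and dense (long and delicate, by explicit perturbations of $\omega$ on a cover of $M$, treating neighbourhoods of periodic and non-periodic points separately). Your parametric-transversality skeleton is a genuinely different and in principle viable route, and your reduction (injective immersion on a compact manifold $\Rightarrow$ embedding; residual sets intersect to residual sets) is sound. The problem is that the hard content of the theorem lives exactly in the cases your sketch passes over, and one of your key claims is false.

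Specifically, your injectivity step asserts that ``away from periodic orbits the $2m+1$ orbit points $\phi^j x,\phi^j y$ for $0\le j\le 2m$ can be arranged to be mutually distinct.'' This fails whenever $y$ lies on the orbit of $x$, say $y=\phi^k(x)$ with $1\le k\le 2m$: then $\phi^j(y)=\phi^{j+k}(x)$ and the two delay windows overlap in $2m+1-k$ points, no matter how non-periodic $x$ is. For such pairs a perturbation of $\omega$ near the distinct orbit points enters several components of $G$ simultaneously: writing $\delta_i$ for the perturbation value at $\phi^i(x)$, the $j$th component of $G$ changes by $\delta_j-\delta_{j+k}$. For non-periodic $x$ this difference map is still surjective (the shift-by-$k$ relation on the finite index set has no cycles, so one can back-solve along chains), so the conclusion survives, but only via this telescoping argument, not via the distinctness claim you make.

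The deeper gap concerns periodic points of period $p\ge 2m+1$. Your case analysis treats only $p\le 2m$, i.e.\ the finitely many points covered by hypothesis (1); but the hypotheses place no restriction whatsoever on periodic points of higher period, and there may be a continuum of them (for instance $\phi$ may restrict to a rigid rotation by $2\pi/p$ on an invariant circle, every point of which has period $p$). For a pair $(x,\phi^k x)$ on such an orbit the $\omega$-derivative of $G$ is \emph{not} surjective onto $\mathbb{R}^{2m+1}$: the components are $\delta_j-\delta_{j+k}$ with indices taken mod $p$, and whenever a full cycle $C$ of the shift-by-$k$ map on $\mathbb{Z}_p$ lies inside the window $\{0,\dots,2m\}$ one has $\sum_{j\in C}G_j\equiv 0$ identically, so both $G$ and the image of its $\omega$-derivative are confined to a proper subspace of $\mathbb{R}^{2m+1}$. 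The parametric transversality theorem therefore cannot be invoked in the form you state. The theorem is rescued by a refined argument — $G$ takes values in that constrained subspace, transversality to $\{0\}$ \emph{within} the subspace still yields an empty zero set because the family of same-orbit pairs has dimension at most $m$ while the codimension of $\{0\}$ in the subspace is at least $2m+1-\lfloor(2m+1)/2\rfloor>m$ — but this case split over the cycle structure is entirely absent from your proposal, and it is not a cosmetic omission: it is precisely the delicate part that Huke's density argument, and every proof of Takens' theorem, is built to handle.
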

\begin{proof}
    \cite{Hukes_thm}. 
\end{proof}
Huke's proof that $\Phi_{(\phi,\omega)}$ is a $C^1$ embedding for generic $\omega$ comprises two steps. First, he shows that $\Phi_{(\phi,\omega)}$ is a $C^1$ embedding for an open subset of $C^2$ observation functions, and second, he shows that $\Phi_{(\phi,\omega)}$ is an embedding for a dense subset of all $C^2$ observation functions. The first step (to prove openness) is fairly simple while the second (the proof of density) is long and delicate. A brief summary of the density part of the proof is as follows. An arbitrary $C^2$ observation function $\omega$ is carefully perturbed on each open set in a cover of the manifold $M$ such that $\omega$ becomes immersive on each set. The observation function $\omega$ is then perturbed again on each open set in the cover in order to make $\omega$ injective, with care taken to ensure $\omega$ remains immersive on each open set. This procedure is applied separately to open sets which contain periodic points and open sets that do not.
We believe it is possible to build on this result and modify the proof of Huke's Theorem in order to prove an ESN Embedding Conjecture in the form that we now state. 

    \begin{conj}
        (ESN Embedding Conjecture) Let $M$ be a compact $m$-manifold and $n \in \mathbb{N}$ such that $n>2m$. Let $A$ be an $n \times n$ matrix with $\lVert A \rVert_2 < \text{min}(1 / \lVert D\phi^{-1} \rVert_{\infty} , 1)$, and $W^{\text{in}}$ a $n \times 1$ matrix, and let the triple $(\varphi,A,W^{\text{in}})$ be an ESN. Let $\omega \in C^2(M,\mathbb{R})$ and let $\phi \in \text{Diff}^2(M)$ (and possibly requiring additional properties), and let $A,W^{\text{in}}$ be generic matrices in the topology induced by the matrix $2$-norm. Then the Echo State Map $f \in C^1(M,\mathbb{R}^n)$ is a $C^1$ embedding.
        \label{ESN_embedding_conjecture}
    \end{conj}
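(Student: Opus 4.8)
The plan is to transfer the genericity in Huke's Theorem from the delay observation map to the Echo State Map $f$ by exploiting the time-reversed delay structure hidden in the fixed-point equation $f = \varphi(Af\circ\phi^{-1} + W^{\text{in}}\omega)$. Since $\sigma' \in (0,1)$, the activation $\varphi$ is strictly monotone in each coordinate and hence a diffeomorphism onto its image; applying Takens/Huke to $\phi^{-1} \in \text{Diff}^2(M)$ (whose periodic points and eigenvalue conditions coincide with those of $\phi$) shows that the \emph{backward} delay map $x \mapsto (\omega(x), \omega\circ\phi^{-1}(x), \ldots, \omega\circ\phi^{-2m}(x))$ is a $C^1$ embedding for generic $\omega$. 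A prerequisite I would record first is a regularity lemma: because $\tilde\Psi$ is a uniform contraction whose constant and data-dependence are controlled by the estimates proving Theorem~\ref{c1_theorem}, the fixed point $f$ depends continuously, indeed $C^1$, on the triple $(\omega,A,W^{\text{in}})$. This makes ``generic'' meaningful for $f$ and lets perturbations of the data be transported into perturbations of $f$.

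For the immersion step I would differentiate the fixed-point equation and unroll the resulting linear recursion to obtain the convergent series
\begin{align}
    Df_x = \sum_{j=0}^{\infty} \left( \prod_{i=0}^{j-1} \Lambda(\phi^{-i}x)\,A \right) \Lambda(\phi^{-j}x)\, W^{\text{in}}\, D(\omega\circ\phi^{-j})_x, \nn
\end{align}
where $\Lambda(x)$ is the diagonal matrix of $\sigma'$ values at the pre-activation; convergence is guaranteed by $\lVert A \rVert_2 < 1/\lVert D\phi^{-1} \rVert_\infty$ together with $\lVert \Lambda \rVert_2 \le 1$. Each summand is a rank-one $n \times m$ matrix pairing an $\mathbb{R}^n$ coefficient with the delay covector $D(\omega\circ\phi^{-j})_x$. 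Huke's argument supplies, generically, $m$ linearly independent covectors among $j = 0,\ldots,2m$, so the task is to show that for generic $A,W^{\text{in}}$ the associated coefficient vectors are in general position in $\mathbb{R}^n$ (feasible since $n > 2m$), so the series cannot drop rank. I expect this to go through by a transversality or measure-zero argument on the matrix parameters, using the extra freedom that $A$ and $W^{\text{in}}$ afford beyond the scalar freedom available to Huke.

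Injectivity is the crux. Applying injectivity of $\varphi$ to a putative self-intersection $f(x)=f(y)$ yields $A\bigl(f(\phi^{-1}x)-f(\phi^{-1}y)\bigr) = W^{\text{in}}\bigl(\omega(y)-\omega(x)\bigr)$, tying agreement of reservoir states to agreement of observations along the backward orbit. The plan is to adapt the density half of Huke's proof: cover $M$, and on each chart perturb $\omega$ (and, where $\omega$ alone is insufficient, the matrices $A$ and $W^{\text{in}}$) to remove self-intersections, handling charts containing periodic points of period $\le 2m$ separately via hypotheses (1)--(2) of Huke's Theorem. The \textbf{main obstacle} is that, unlike in Huke's setting, $f$ depends \emph{nonlocally} on $\omega$: a perturbation supported in one chart propagates through the infinite backward history and alters $f$ everywhere at once. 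Controlling this propagation --- showing that the fading-memory decay localizes the dominant effect of each perturbation enough to preserve the independent ``knobs'' Huke's argument requires, while the $C^1$ data-dependence keeps the transversality already achieved in the immersion step intact --- is where the genuine difficulty lies.

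Finally I would assemble the pieces. Openness of both the immersion and the embedding conditions follows from compactness of $M$ and the regularity lemma, since immersions are open in the $C^1$ topology and an injective immersion out of a compact manifold is stably injective. Combined with the density arguments above, this exhibits the set of $(\omega,A,W^{\text{in}})$ for which $f$ is simultaneously an immersion and injective as residual; and an injective immersion of the compact manifold $M$ is automatically a $C^1$ embedding. Under the mild extra hypotheses flagged in the conjecture one would then upgrade this residual set to a full-measure set in the parameters, yielding the almost-sure statement, with the positive-probability version recovered as the weaker consequence already anticipated in the paper.
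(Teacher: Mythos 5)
You are attempting to prove a statement that the paper itself deliberately leaves as a conjecture: the authors prove only that the set of triples $(A,W^{\text{in}},\omega)$ yielding an embedding is open (Lemma~\ref{open_lemma}) and non-empty (Lemma~\ref{non_empty_lemma}), which gives an embedding with positive probability (the Weak ESN Embedding Theorem), and they state explicitly that the remaining density step ``is no easy task.'' Measured against that, your proposal reproduces the paper's partial progress in different language --- your ``regularity lemma'' on continuous dependence of the fixed point on $(\omega,A,W^{\text{in}})$ is essentially Lemma~\ref{open_lemma}, and your application of Huke's theorem to the backward delay map $x \mapsto (\omega(x),\omega\circ\phi^{-1}(x),\ldots)$ is essentially the construction in Lemma~\ref{non_empty_lemma} --- but it does not close the gap that makes this a conjecture. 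The crux, injectivity via a density argument, is only named, not solved: you correctly identify that a chart-local perturbation of $\omega$ propagates through the entire infinite backward history and alters $f$ globally, but you give no mechanism for controlling this propagation, which is exactly the obstruction the authors could not overcome. Flagging the difficulty is not the same as resolving it, so the proposal is a research plan, not a proof.

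There are also two concrete technical holes in the steps you do sketch. In the immersion argument, the derivative is an infinite sum of rank-one terms, and rank is not stable under addition: even if the first $2m+1$ terms generically span an $m$-dimensional image, the tail $j > 2m$ can cancel rank unless you prove a quantitative bound comparing the smallest singular value of the leading partial sum with the geometric tail --- and no such bound is given. Worse, the diagonal factors $\Lambda(\phi^{-i}x)$ are evaluated at pre-activations involving $f$ itself, so perturbing $A$ and $W^{\text{in}}$ simultaneously moves the coefficient vectors, the covectors, and $f$; the transversality argument as stated is circular. Finally, in the assembly step you slide from ``residual'' to ``full measure'': these are inequivalent notions (a residual set can have measure zero), so even if the density arguments succeeded, the claimed upgrade to an almost-sure statement would need a separate prevalence-type argument rather than following ``as a weaker consequence.''
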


We now summarise our partial success towards proving this conjecture. In particular we can establish the properties analogous to the first part of Huke's proof of Takens' Theorem: we will show that 
the set of triples $(A,W^{\text{in}},\omega)$ of reservoir matrix, input matrix, and observation function for which $f$ is a $C^1$ embedding, is open and non-empty. Consequently, for a generic observation function $\omega$, and matrices $A$ and $W^{\text{in}}$ drawn from a distribution with full support (if the pdf is well defined, it is greater than 0 over its domain), $f$ is a $C^1$ embedding with probability $\alpha > 0$. To prove the full ESN Embedding Conjecture, all that remains is to show that the triples $(A,W^{\text{in}},\omega)$ for which $f$ is an embedding are dense in the space of admissible triples, but this is no easy task, so we will be satisfied here with the proof of only openness and non-emptiness. 

\begin{lemma}
Let $M$ be a compact $m$-manifold and $n \in \mathbb{N}$. Let $A$ be an $n \times n$ matrix, and suppose that $\lVert A \rVert_2 < \text{min}(1/\lVert D\phi^{-1} \rVert_{\infty}, 1)$. As usual
let $W^{\text{in}}$ a $n \times 1$ matrix, let the triple $(\varphi,A,W^{\text{in}})$ be an ESN, $\phi \in \text{Diff}^1(M)$ and $\omega \in C^1(M,\mathbb{R})$. Define the set
$\Omega := \{ (A,W^{\text{in}},\omega) \mid f_{A,W^{\text{in}},\omega} \text{ is a } C^1 {\text{ embedding.}}\}$. Then the set $\Omega$ is open in the $C^1$ topology.
\label{open_lemma}
\end{lemma}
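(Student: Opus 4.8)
The plan is to exhibit $\Omega$ as the preimage of an open set under a continuous map, so that openness is immediate. The target open set will be the collection $\mathrm{Emb}^1(M,\mathbb{R}^n)$ of $C^1$ embeddings inside $C^1(M,\mathbb{R}^n)$: since $M$ is compact, the Whitney $C^1$ topology coincides with the $C^1$-norm topology, and it is a standard fact of differential topology that the embeddings form an open subset of $C^1(M,\mathbb{R}^n)$ (an injective immersion of a compact manifold remains an injective immersion under sufficiently small $C^1$ perturbations). Granting this, it remains only to prove that the parametrization $\Theta := (A,W^{\text{in}},\omega) \mapsto f_\Theta$, sending a triple to its Echo State Map, is continuous from the parameter space (matrix $2$-norm on $A$ and $W^{\text{in}}$, $C^1$-norm on $\omega$, with $\phi$ fixed) into $C^1(M,\mathbb{R}^n)$. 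Note that the standing hypothesis $\lVert A\rVert_2 < \min(1,1/\lVert D\phi^{-1}\rVert_\infty)$ already carves out an open region of parameters on which $f_\Theta$ is defined, so it suffices to prove $\Omega$ open within this region.

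To prove continuity of the parametrization I would use that $f_\Theta$ is the unique fixed point of the contraction $\tilde\Psi_\Theta(g) = \varphi(Ag\circ\phi^{-1}+W^{\text{in}}\omega)$ from Theorem~\ref{c1_theorem}, together with the standard principle that fixed points of a continuously varying family of uniform contractions vary continuously. Two ingredients are required. First, a \emph{uniform} contraction constant near a given $\Theta_0$: because $\phi$ (hence $\lVert D\phi^{-1}\rVert_\infty$) is fixed and $A\mapsto\lVert A\rVert_2$ is continuous, the strict inequality $\lVert A\rVert_2\max(1,\lVert D\phi^{-1}\rVert_\infty) < 1$ from the proof of Theorem~\ref{c1_theorem} persists on a neighborhood of $\Theta_0$, giving a common constant $c<1$. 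Second, continuity of $\Theta\mapsto\tilde\Psi_\Theta(g)$ for fixed $g$: the inner assignment $(A,W^{\text{in}},\omega)\mapsto Ag\circ\phi^{-1}+W^{\text{in}}\omega$ is (bi)linear and manifestly $C^1$-continuous in $\Theta$, and post-composition with $\varphi$ is $C^1$-continuous because $\sigma\in C^1$ has a continuous (bounded) derivative, so the Nemytskii operator $h\mapsto\varphi(h)$ is $C^1$-continuous on the relevant bounded family of arguments.

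With these in hand the estimate closes the argument. Writing $f_0=f_{\Theta_0}$ and splitting,
\begin{align}
\lVert f_\Theta - f_0\rVert_{C^1} &\le \lVert\tilde\Psi_\Theta(f_\Theta)-\tilde\Psi_\Theta(f_0)\rVert_{C^1} + \lVert\tilde\Psi_\Theta(f_0)-\tilde\Psi_{\Theta_0}(f_0)\rVert_{C^1} \nn \\
&\le c\,\lVert f_\Theta - f_0\rVert_{C^1} + \lVert\tilde\Psi_\Theta(f_0)-\tilde\Psi_{\Theta_0}(f_0)\rVert_{C^1}, \nn
\end{align}
I would rearrange to $(1-c)\lVert f_\Theta-f_0\rVert_{C^1}\le\lVert\tilde\Psi_\Theta(f_0)-\tilde\Psi_{\Theta_0}(f_0)\rVert_{C^1}$ and let $\Theta\to\Theta_0$; the right-hand side vanishes by the second ingredient, so $f_\Theta\to f_0$ in $C^1$. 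Hence the parametrization is continuous and $\Omega$, being the preimage of the open set of embeddings, is open.

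I expect the main obstacle to be the $C^1$-continuity of the superposition operator $h\mapsto\varphi(h)$, rather than any of the surrounding bookkeeping. One must verify that $D\big(\varphi(h_\Theta)\big)=\sigma'(h_\Theta+b)\,Dh_\Theta$ converges uniformly as $\Theta\to\Theta_0$, which hinges on the arguments $h_\Theta$ all taking values in a common compact set (valid since $M$ is compact and the family is $C^1$-bounded near $\Theta_0$), so that $\sigma'$ is uniformly continuous there and $\sigma'(h_\Theta+b)\to\sigma'(h_{\Theta_0}+b)$ uniformly. The remaining pieces — the uniform contraction bound and the openness of $\mathrm{Emb}^1(M,\mathbb{R}^n)$ — are respectively immediate from continuity of $A\mapsto\lVert A\rVert_2$ and quotable from the differential-topology literature.
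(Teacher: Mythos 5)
Your proof is correct, and at the strategic level it is the same as the paper's: both present $\Omega$ as the preimage of the open set of $C^1$ embeddings under the parametrization $(A,W^{\text{in}},\omega)\mapsto f_{A,W^{\text{in}},\omega}$, reducing everything to continuity of that map. Where you genuinely diverge is in how continuity is proved. The paper fixes a convergent sequence $(A_n,W^{\text{in}}_n,\omega_n)$, expands both $f_n$ and $f$ through their fixed-point equations, and splits $A_nf_n-Af=(A_n-A)f_n+A(f_n-f)$; because the first piece involves the varying functions $f_n$ rather than a single fixed function, the paper must first establish a uniform bound $\lVert f_n\rVert_{C^1}\le\mu$ over the whole family (via a self-consistent estimate on $\lVert Df_n\rVert_{\infty}$), and that boundedness argument occupies a substantial part of its proof. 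Your splitting $f_\Theta-f_0=\bigl(\tilde\Psi_\Theta(f_\Theta)-\tilde\Psi_\Theta(f_0)\bigr)+\bigl(\tilde\Psi_\Theta(f_0)-\tilde\Psi_{\Theta_0}(f_0)\bigr)$, the standard continuous-dependence-of-fixed-points argument, evaluates $\tilde\Psi_\Theta$ only at the single fixed function $f_0$, so no uniform bound over a family is ever needed: the second term is at most $\max(1,\lVert D\phi^{-1}\rVert_\infty)\lVert A-A_0\rVert_2\lVert f_0\rVert_{C^1}+\lVert W^{\text{in}}\omega-W^{\text{in}}_0\omega_0\rVert_{C^1}$, which vanishes as $\Theta\to\Theta_0$. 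That is what your route buys: it deletes the paper's boundedness lemma and shortens the proof. Finally, you are right to flag the superposition operator $h\mapsto\varphi(h)$ as the delicate point common to both arguments: the paper dismisses it with ``because $\varphi$ is contracting,'' but in the $C^1$ norm the derivative part produces the cross term $\bigl(\sigma'(u+b)-\sigma'(v+b)\bigr)Dv$, which is not controlled by $\lVert u-v\rVert_{C^1}$ alone; your sketch (uniform continuity of $\sigma'$ on a common compact set of argument values, together with $C^1$-boundedness of those arguments) is exactly what is needed to make this step rigorous, in your proof and in the paper's alike.
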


\begin{proof}
    First we define the map $\Psi$ that associates the ESM $f$ to the triple $(A,W^{\text{in}},\omega)$; let $\Psi : (A,W^{\text{in}},\omega) \to C^1(M,\mathbb{R}^n)$ be defined by
        $\Psi(A,W^{\text{in}},\omega) = f_{A,W^{\text{in}},\omega}$.
We now argue as follows. Since $C^1$ embeddings form an open subset of $C^1(M,\mathbb{R})$, and the inverse image of a continuous map is open, it suffices to show that $\Psi$ is continuous in order to
then conclude that $\Omega$ is open. To show continuity of $\Psi$ we must prove that if $(A_n, W^{\text{in}}_n,\omega_n)_{n \in \mathbb{N}} \to (A,W^{\text{in}},\omega)$ then
    $\lVert \Psi(A_n , W^{\text{in}}_n , \omega_n) - \Psi(A,W^{\text{in}},\omega) \rVert_{C^1} \to 0$. 

To lighten the notation we will write $f$ for $f_{A, W^{\text{in}}, \omega}$ and $f_n$ for $f_{A_n , W^{\text{in}}_n, \omega_n}$. As a preliminary result we estimate as follows:
\begin{align}
        \lVert A_n f_n \circ \phi^{-1} - A f \circ \phi^{-1} \rVert_{C^1} 
        & =  \lVert A_n f_n \circ \phi^{-1} - A f \circ \phi^{-1} \rVert_{\infty} \nn \\
        & +  \lVert A_n Df_n \circ \phi^{-1} D\phi^{-1} - A Df \circ \phi^{-1} D\phi^{-1} \rVert_{\infty} \label{An_fn_phi}  \\
        &  \qquad \text{ by definition of the $C^1$ norm } \nn \\
        & \leq  \lVert A_n f_n \circ \phi^{-1} - A f \circ \phi^{-1} \rVert_{\infty}
        + \lVert D\phi^{-1} \rVert_{\infty} \lVert A_n D f_n \circ \phi^{-1} - A D f \circ \phi^{-1} \rVert_{\infty} \nn \\
        & \leq  \lVert A_n f_n - A f \rVert_{\infty}
        + \lVert D \phi^{-1} \rVert_{\infty} \lVert A_n Df_n - A Df \rVert_{\infty} \nn \\
        & \leq  \text{max}(1,\lVert D\phi^{-1} \lVert_{\infty})( \lVert A_n f_n - A f \rVert_{\infty} + \lVert A_n Df_n - A Df \rVert_{\infty} ) \nn  \\
        & \leq  \text{max}(1,\lVert D\phi^{-1} \rVert_{\infty}) \lVert A_n f_n - Af \rVert_{C^1} \nn \\
        &= \tau \lVert A_n f_n - Af \rVert_{C^1} \label{A_f_phi} 
\end{align}
    where we have defined $\tau := \text{max}(1,\lVert D\phi^{-1} \rVert_{\infty})$. We will prove one more preliminary result: that $\lVert f_n \rVert_{C^1}$ is bounded. We can see that $\lVert f_n \rVert_{\infty}$ is bounded by boundedness of $\varphi$ so all that remains is to bound $\lVert Df_n \rVert_{\infty}$. Since
    \begin{align}
        f_n &= \varphi( A_n f_n \circ \phi^{-1} + W^{\text{in}}_n \omega_n ) \nn
    \end{align}
we compute directly that
\begin{align}
        Df_n &= D\varphi( A_n f_n \circ \phi^{-1} W^{\text{in}}_n \omega_n )( A_n D f_n \circ \phi^{-1} D \phi^{-1} + W^{\text{in}}_n D \omega_n ) \nn 
\end{align}
from which we can estimate that
\ba
\lVert Df_n \rVert_{\infty} & = & \lVert D\varphi( A_n f_n \circ \phi^{-1} W^{\text{in}}_n \omega_n )( A_n D f_n \circ \phi^{-1} D \phi^{-1} + W^{\text{in}}_n D \omega_n ) \rVert_{\infty} \nn \\
        & \leq & \rVert A_n D f_n \circ \phi^{-1} D \phi^{-1} + W^{\text{in}}_n D \omega_n \rVert_{\infty} \nn \\
        & \leq & \lVert A \rVert_{2} \lVert D f_n \circ \phi^{-1} \rVert_{\infty} \lVert D \phi^{-1} \rVert_{\infty} + \lVert W^{\text{in}}_n \omega_n \rVert_{\infty} \nn \\
        & < & \bar{\rho} \lVert D f_n \circ \phi^{-1}  \rVert_{\infty} \lVert D \phi^{-1} \rVert_{\infty} + \lVert W^{\text{in}}_n D \omega_n \rVert_{\infty} \text{ \ where $\bar{\rho} = \sup_{n \in \mathbb{N}}\lVert A_n \rVert_{2} < 1$} \nn \\
        & = & \bar{\rho} \lVert D f_n \rVert_{\infty} \lVert D \phi^{-1} \rVert_{\infty} + \lVert W^{\text{in}}_n D \omega_n \rVert_{\infty} \nn \\
        & < & \bar{\rho} \lVert D f_n \rVert_{\infty} \lVert D \phi^{-1} \rVert_{\infty} + \nu \nn
    \ea
    where $\nu$ is a bound for the sequence $\lVert W^{\text{in}}_n D \omega_n \rVert_{\infty}$, which we know exists because $\lVert W^{\text{in}}_n D \omega_n \rVert_{\infty}$ converges. Now upon rearrangement 
    \begin{align}
        \lVert D f_n \rVert_{\infty} < \frac{\nu}{1 - \bar{\rho} \lVert D \phi^{-1} \rVert_{\infty}}, \nn 
    \end{align}
    hence we have bounded $\lVert Df_n \rVert_{\infty}$ and $\lVert f_n \rVert_{\infty}$ thus we have a bound for $\lVert f_n \rVert_{C^1}$, which we will call $\mu$.
    Now, for all $\epsilon > 0$ there exists $n \in \mathbb{N}$ such that both
    \begin{align}
        \lVert A_n - A \rVert_2 < \frac{\epsilon(1 - \tau \lVert A \rVert_2)}{2\tau\mu}
        \label{An_A}
    \end{align}
    and 
    \begin{align}
        \lVert W^{\text{in}}_n\omega_n - W^{\text{in}}\omega \rVert_{C^1} < \frac{\epsilon(1 - \tau \lVert A \rVert_2)}{2}.
        \label{Wn_W}
    \end{align}
    Armed with these estimates we can now compute that
    \ba
    %    &\lVert \psi(A_n , W^{\text{in}}_n , \omega_n) - \psi(A , W^{\text{in}} , \omega) \rVert_{C^1} \\
    %    = &\lVert f_{A_n , W^{\text{in}}_n , \omega_n} - f_{A , W^{\text{in}} , \omega} \rVert_{C^1} \\
\lVert f_n - f \rVert_{C^1}
        & = & \lVert \varphi(A_n f_n \circ \phi^{-1} + W^{\text{in}}_n \omega_n) - \varphi(A f \circ \phi^{-1} + W^{\text{in}} \omega) \rVert_{C^1} \text{ by Theorem~\ref{recursion_relation} } \nn \\ 
        & \leq & \lVert A_n f_n \circ \phi^{-1} + W^{\text{in}}_n \omega_n - A f \circ \phi^{-1} - W^{\text{in}} \omega \rVert_{C^1} \text{ because $\varphi$ is contracting} \nn \\
        & \leq & \lVert A_n f_n \circ \phi^{-1} - A f \circ \phi^{-1} + W^{\text{in}}_n \omega_n - W^{\text{in}} \omega \rVert_{C^1} \nn \\
        & \leq & \lVert A_n f_n \circ \phi^{-1} - A f \circ \phi^{-1} \rVert_{C^1} + \lVert W^{\text{in}}_n \omega_n - W^{\text{in}} \omega \rVert_{C^1} \nn \\
        & \leq & \tau \lVert A_n f_n - A f \rVert_{C^1} + \lVert W^{\text{in}}_n \omega_n - W^{\text{in}} \omega \rVert_{C^1} \text{ by equations \eqref{An_fn_phi}-\eqref{A_f_phi}} \nn \\
        & \leq & \tau \lVert A_n f_n - Af_n + Af_n - Af \rVert_{C^1} + \lVert W^{\text{in}}_n \omega_n - W^{\text{in}} \omega \rVert_{C^1} \nn \\
    %    = &\tau \lVert Af_n - Af + A_n f_n - Af_n  \rVert_{C^1} + \lVert W^{\text{in}}_n \omega_n - W^{\text{in}} \omega \rVert_{C^1} \\
        & \leq & \tau( \lVert A f_n - A f \rVert_{C^1} + \lVert A_n f_n - Af_n \rVert_{C^1} ) + \lVert W^{\text{in}}_n \omega_n - W^{\text{in}} \omega \rVert_{C^1} \nn \\
        & \leq & \tau \lVert A \rVert_2 \lVert f_n - f \rVert_{C^1} + \tau \lVert f_n \rVert_{C^1} \lVert A_n - A \rVert_{2} + \lVert W^{\text{in}}_n \omega_n - W^{\text{in}} \omega \rVert_{C^1} \nn \\
        & < & \tau \lVert A \rVert_2 \lVert f_n - f \rVert_{C^1} + \tau\mu \lVert A_n - A \rVert_{2} + \lVert W^{\text{in}}_n \omega_n - W^{\text{in}} \omega \rVert_{C^1} \nn \\
        & < & \tau \lVert A \rVert_2 \lVert f_n - f \rVert_{C^1} + \frac{\epsilon(1 - \tau \lVert A \rVert_2)}{2} + \frac{\epsilon(1 - \tau \lVert A \rVert_2)}{2}
        \text{ by equations \eqref{An_A} and \eqref{Wn_W}} \nn \\
        & < & \tau \lVert A \rVert_2 \lVert f_n - f \rVert_{C^1} + \epsilon(1 - \tau \lVert A \rVert_2). \nn 
    \ea
    
    Hence, rearranging we see that
    $\lVert f_n - f \rVert_{C^1}(1 - \tau \lVert A \rVert_2) < \epsilon(1 - \tau \lVert A \rVert_2)$ which implies
    $\lVert f_n - f \rVert_{C^1} < \epsilon$ as required.
    \end{proof}

To prove non-emptiness we construct an explicit reservoir matrix $A$ and input matrix $W^{\text{in}}$ for which the Echo State Map $f$ is an embedding, using a trick borrowed from \cite{4118282}.

First, for a given observation function $\omega$ we define $\Lambda_\omega$ to be the subset of matrices $A$ and $W^\text{in}$ for which the associated map $f$ is a $C^1$ embedding:
\ba
\Lambda_{\omega}:= \{ (A,W^{\text{in}}) \mid f_{A,W^{\text{in}},\omega} \text{ is a } C^1 {\text{ embedding.}}\} \label{def:lambdaomega}
\ea 

\begin{lemma}
    Let $M$ be a compact $m$-manifold and $n \in \mathbb{N}$. Let $A$ be an $n \times n$ matrix
    and suppose $\lVert A \rVert_2 < \text{min}(1 / \lVert D\phi^{-1} \rVert_{\infty} , 1)$. Let $W^{\text{in}}$ be an $n \times 1$ matrix and let the triple $(\varphi,A,W^{\text{in}})$ be an ESN.
    Suppose that $\phi \in \text{Diff}^2(M)$ has the following two properties:
    \begin{itemize}
    \item[(1)] $\phi$ has only finitely many periodic points with periods
    less than or equal to $2m$.
    \item[(2)] If $x \in M$ is any periodic point with period $k < 2m$ then the eigenvalues of the derivative $D \phi^k$ at $x$ are distinct.
    \end{itemize}
    Then for a generic $\omega \in C^2(M,\mathbb{R})$, $\Lambda_{\omega}$ is non-empty.
\label{non_empty_lemma}
\end{lemma}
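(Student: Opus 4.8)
The plan is to prove non-emptiness constructively, by exhibiting a single explicit pair $(A,W^{\text{in}})$ for which the Echo State Map $f$ turns out to be a relabelled, coordinate-wise reparametrised copy of a delay observation map, and then to invoke Huke's formulation of Takens' Theorem applied to the diffeomorphism $\phi^{-1}$. Concretely, I would take the given dimension $n$, which satisfies $n\ge 2m+1$ since $n>2m$, fix a scalar $\rho$ with $0<\rho<\min(1/\lVert D\phi^{-1}\rVert_{\infty},1)$, let $A=\rho S$ where $S$ is the nilpotent shift defined by $(Sr)_i=r_{i+1}$ for $i<n$ and $(Sr)_n=0$, and set $W^{\text{in}}=e_n$. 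Because $\lVert A\rVert_2=\rho<\min(1/\lVert D\phi^{-1}\rVert_{\infty},1)$, Theorem~\ref{c1_theorem} guarantees that this $f$ exists and is the \emph{unique} $C^1$ solution of $f=\varphi(Af\circ\phi^{-1}+W^{\text{in}}\omega)$, so the entire argument reduces to showing this particular $f$ is an embedding for generic $\omega$.

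The heart of the argument is to unroll the fixed point equation coordinatewise. Since $W^{\text{in}}=e_n$ feeds only the last coordinate and $A=\rho S$ couples coordinate $i$ only to coordinate $i+1$, the last component satisfies $f_n=\sigma(\omega+b_n)$ with no dependence on $f$, and the recursion $f_{n-j}(x)=\sigma(\rho f_{n-j+1}(\phi^{-1}x)+b_{n-j})$ terminates after finitely many steps. I would prove by induction that $f_{n-j}(x)=g_j(\omega(\phi^{-j}x))$, where $g_0(t)=\sigma(t+b_n)$ and $g_j(t)=\sigma(\rho g_{j-1}(t)+b_{n-j})$. Each $g_j$ is strictly increasing and $C^1$ with $g_j'>0$ everywhere (by the chain rule, using $\sigma'\in(0,1)$ and $\rho>0$), hence a $C^1$ diffeomorphism onto its image. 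Substituting this explicit formula back into the fixed point equation verifies it is a genuine solution, so by uniqueness in Theorem~\ref{c1_theorem} it is exactly the Echo State Map. Writing $G=(g_0,\dots,g_{n-1})$ acting coordinatewise, we obtain $f=G\circ\Phi_{(\phi^{-1},\omega)}$, where $\Phi_{(\phi^{-1},\omega)}(x)=(\omega(x),\omega(\phi^{-1}x),\dots,\omega(\phi^{-(n-1)}x))$ is the $n$-term delay map for the dynamics $\phi^{-1}$, and $G$ is a coordinatewise diffeomorphism onto its image.

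Next I would apply Huke's theorem to $\psi:=\phi^{-1}$. Since $\phi\in\text{Diff}^2(M)$ we have $\psi\in\text{Diff}^2(M)$; the periodic points of $\psi$ of period $k$ coincide with those of $\phi$ (because $\phi^{-k}(x)=x$ if and only if $\phi^k(x)=x$), so hypothesis (1) transfers verbatim, while the eigenvalues of $D\psi^k=D\phi^{-k}$ are the reciprocals of the distinct nonzero eigenvalues of $D\phi^k$ and hence remain distinct, giving hypothesis (2). Huke's theorem then supplies a generic set of $\omega\in C^2(M,\mathbb{R})$ for which the $(2m+1)$-term delay map of $(\phi^{-1},\omega)$ is a $C^1$ embedding. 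For such $\omega$ the full $n$-term map $\Phi_{(\phi^{-1},\omega)}$ contains these $2m+1$ coordinates among its components, so it is injective (equal images force equality on the first $2m+1$ coordinates) and immersive (its Jacobian contains a rank-$m$ submatrix), hence itself a $C^1$ embedding of the compact manifold $M$. Post-composing with the coordinatewise diffeomorphism $G$ preserves this, so $f=G\circ\Phi_{(\phi^{-1},\omega)}$ is a $C^1$ embedding; thus $(A,W^{\text{in}})\in\Lambda_{\omega}$ and $\Lambda_{\omega}$ is non-empty for this generic set of $\omega$.

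I expect the delicate step to be the coordinatewise unrolling: one must confirm that the implicitly defined fixed point really collapses to the explicit nested-$\sigma$ form $f_{n-j}=g_j\circ\omega\circ\phi^{-j}$, and that the genuinely nonlinear activation $\sigma$ together with the constraint $\lVert A\rVert_2=\rho<1$ (which forbids the clean linear shift $\rho=1$) does not destroy the bijectivity or the $C^1$-diffeomorphism property of the reparametrisations $g_j$. Once that algebraic identity and the monotonicity of each $g_j$ are in hand, the remainder is bookkeeping: transferring the Takens--Huke hypotheses to $\phi^{-1}$, and observing that augmenting an embedding with extra delay coordinates and post-composing with a coordinatewise diffeomorphism both preserve the embedding property.
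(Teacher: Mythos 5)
Your proposal is correct and follows essentially the same route as the paper's own proof: an explicit nilpotent shift reservoir matrix with a single non-zero input coordinate, unrolling of the fixed-point equation so that $f$ factors as a coordinatewise monotone $C^1$ reparametrisation composed with the delay map $\Phi_{(\phi^{-1},\omega)}$ (the paper uses the sub-diagonal shift feeding coordinate $1$, you use the mirror-image super-diagonal shift feeding coordinate $n$), followed by an appeal to Huke's formulation of Takens' Theorem. If anything you are more careful than the paper on points it glosses over: you scale the shift by $\rho < \min(1/\lVert D\phi^{-1}\rVert_{\infty},1)$ rather than the fixed $1/2$ (which could violate the norm hypothesis when $\lVert D\phi^{-1}\rVert_{\infty} > 2$), you invoke uniqueness from the Echo State Mapping Theorem to justify the unrolled formula, and you explicitly verify that Huke's hypotheses transfer to $\phi^{-1}$ and that adding extra delay coordinates preserves the embedding property.
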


\begin{proof}
    Let
    \begin{align}
        A = \frac{1}{2}        
        \begin{bmatrix}
    0 & 0 & 0 & \dots & 0 \\
    1 & 0 & 0 & \dots & 0 \\
    0 & 1 & 0 & \dots & 0 \\
    0 & 0 & 1 & \dots & 0 \\
    0 & 0 & 0 & \ddots & 0 
        \end{bmatrix} \nn
    \end{align}
    and $W^{\text{in}}_1 = 1, \ W^{\text{in}}_j = 0$ for $2 \leq j \leq n$. Then the ESM
    \begin{align}
        f := 
        \begin{bmatrix}
            \varphi_1 \circ \omega \\
            \varphi_2 \circ 2^{-1} \varphi_1 \circ \omega \circ \phi^{-1} \\
            \varphi_3 \circ 2^{-1} \varphi_2 \circ 2^{-1} \varphi_1 \circ \omega \circ \phi^{-2} \\
            \vdots \\
            \varphi_n \circ 2^{1-n} \varphi_{n-1} \dots 2^{-1} \varphi_1 \circ \omega \circ \phi^{-n+1} 
        \end{bmatrix}, \nn
    \end{align}
    where $\varphi_i(r_i) = \sigma(r_i + b_i)$ is the $i$th component function of $\varphi$, as defined in~\eqref{eqn:phi_i},
        solves the equation
        \begin{align}
            f = \varphi(A f \circ \phi^{-1} + W^{\text{in}}\omega). \nn
        \end{align}
        We can see moreover that $f \equiv g \circ \Phi_{(\phi,\omega)}$ where 
        \begin{align}
            g := 
        \begin{bmatrix}
            \varphi_1 \\
            \varphi_2 \circ 2^{-1} \varphi_1  \\
            \varphi_3 \circ 2^{-2} \varphi_2 \circ 2^{-1} \varphi_1 \\
            \vdots \\
            \varphi_n \circ 2^{1-n} \varphi_{n-1} \dots 2^{-1} \varphi_1
        \end{bmatrix} \nn
        \end{align}
        and $\Phi_{(\phi,\omega)}$ is the delay observation map 
        \begin{align}
            \Phi_{(\phi,\omega)}(x) = ( \omega(x) , \omega \circ \phi^{-1} (x) , \omega \circ \phi^{-2} (x) , ... , \omega \circ \phi^{-n+1}(x) ). \nn
        \end{align}
        By design, each $\varphi_i$ is a $C^1$ embedding hence $g$ is a $C^1$ embedding. For generic $\omega \in C^2(M,\mathbb{R})$ the delay observation map $\Phi_{(\phi,\omega)}$ is also a $C^1$ embedding, thanks to Takens' Theorem. Noting that the composition of $C^1$ embeddings is a $C^1$ embedding completes the proof.
\end{proof}

\begin{theorem}
    (Weak ESN Embedding Theorem) Let $M$ be a compact $m$-manifold and $n \geq 2m+1$. Let $A$ be a random variable with a distribution that has full support on the space of $n \times n$ matrices for which $\lVert A \rVert_2 < \text{min}(1 / \lVert D\phi^{-1} \rVert_{\infty}, 1)$, and let $W^{\text{in}}$ be a random variable with a distribution that has full support on the space of $n \times 1$ matrices, and let the triple $(\varphi,A,W^{\text{in}})$ be an ESN. Suppose $\phi \in \text{Diff}^2(M)$ has the following two properties:
    \begin{itemize}
    \item[(1)] $\phi$ has only finitely many periodic points with periods
    less than or equal to $2m$.
    \item[(2)] If $x \in M$ is any periodic point with period $k < 2m$ then the eigenvalues of the derivative $D \phi^k$ at $x$ are distinct.
    \end{itemize}
    Then for a generic observation function $\omega \in C^2(M,\mathbb{R})$ the Echo State Map $f$ is a $C^1$ embedding with probability $\alpha > 0$.
\end{theorem}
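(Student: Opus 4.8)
The plan is to package the two preceding lemmas --- openness (Lemma~\ref{open_lemma}) and non-emptiness (Lemma~\ref{non_empty_lemma}) --- together with the full-support hypothesis on the matrix distributions, to convert a non-empty open set into a positive probability. The crucial simplification is that it suffices to fix a single generic observation function $\omega$ and then argue entirely about the random pair $(A, W^{\text{in}})$, with $\omega$ held constant.

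First I would invoke Lemma~\ref{non_empty_lemma}. Since $\phi \in \text{Diff}^2(M)$ satisfies the two periodic-point hypotheses and $n \geq 2m+1$ supplies enough delay coordinates for Takens' Theorem, that lemma produces a generic set of observation functions $\omega \in C^2(M,\mathbb{R})$ for which $\Lambda_\omega$ (defined in~\eqref{def:lambdaomega}) is non-empty. I fix any such $\omega$. The construction there exhibits a concrete admissible witness $(A_0, W^{\text{in}}_0) \in \Lambda_\omega$ via the factorisation $f = g \circ \Phi_{(\phi,\omega)}$. At this point I would pause to confirm that $A_0$ genuinely satisfies the norm constraint $\lVert A_0 \rVert_2 < \min(1/\lVert D\phi^{-1}\rVert_{\infty}, 1)$, so that it lies in the domain on which the distribution is defined; if the scaling factor $\tfrac12$ used in the lemma is too large for a given $\phi$, one simply replaces it by any smaller constant $c \in (0, \min(1/\lVert D\phi^{-1}\rVert_{\infty}, 1))$, which preserves the shift-matrix structure and hence the embedding property.

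Next I would apply Lemma~\ref{open_lemma}, which establishes that the set of triples $\Omega$ for which $f$ is a $C^1$ embedding is open in the $C^1$ topology, through continuity of the map $\Psi$. Restricting to the fibre over the fixed $\omega$, the slice
\[
    \Lambda_\omega = \{\, (A, W^{\text{in}}) : (A, W^{\text{in}}, \omega) \in \Omega \,\}
\]
is open in the space of admissible matrix pairs, since a fibre of an open set is open. Combined with the previous step, $\Lambda_\omega$ is therefore a \emph{non-empty open} subset of the admissible domain, and it contains a basic product neighbourhood $U \times V$ of the witness point $(A_0, W^{\text{in}}_0)$, with $U$ and $V$ open and non-empty.

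Finally I would turn openness into positive probability. By the full-support hypothesis the joint law of $(A, W^{\text{in}})$ (a product of full-support laws under the standard independence assumption) assigns positive mass to every non-empty open subset of its domain, so
\[
    \alpha := \mathbb{P}\big( (A, W^{\text{in}}) \in \Lambda_\omega \big) \;\geq\; \mathbb{P}\big( (A, W^{\text{in}}) \in U \times V \big) \;>\; 0 ,
\]
which is exactly the assertion that $f$ is a $C^1$ embedding with probability $\alpha > 0$. The proof is essentially administrative once the two lemmas are in hand; the only step demanding genuine care is the verification that the explicit witness of Lemma~\ref{non_empty_lemma} lies \emph{inside} the support of the distribution, i.e.\ strictly inside the admissible norm ball, because full support guarantees positive probability only for open sets meeting the domain on which the law actually lives.
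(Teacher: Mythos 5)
Your proposal is correct and takes essentially the same route as the paper's own proof: fix a generic $\omega$ for which Takens' Theorem makes the delay map an embedding, combine non-emptiness of $\Lambda_\omega$ (Lemma~\ref{non_empty_lemma}) with openness (Lemma~\ref{open_lemma}), and use full support of the laws of $A$ and $W^{\text{in}}$ to conclude that $\Lambda_\omega$ receives positive probability. Your two points of extra care --- rescaling the shift-matrix witness so that $\lVert A_0 \rVert_2$ genuinely lies below $\min(1/\lVert D\phi^{-1}\rVert_\infty,1)$, and noting that the fixed-$\omega$ fibre of the open set $\Omega$ is open --- are refinements the paper silently glosses over, but they do not alter the structure of the argument.
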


\begin{proof}
    The space of all observation functions $\omega \in C^2(M,\mathbb{R})$ such that the delay observation map $\Phi_{(\phi,\omega)}$ is an embedding is generic in $C^2(M,\mathbb{R})$, thanks to Takens' Theorem. For any one of these observation functions, the set
        $\Lambda_{\omega}$ defined in~\eqref{def:lambdaomega} 
    is non-empty by Lemma~\ref{non_empty_lemma} and open by Lemma~\ref{open_lemma}. Since $A$, and $W^{\text{in}}$ are random variables with full support, they take values in $\Lambda$ with probability $\alpha > 0$.
\end{proof}

\begin{remark}
The Embedding Conjecture and Weak ESN Embedding Theorem state that under the right conditions $f$ is an embedding. In practical examples we cannot compute $f$ exactly because it is obtained in the limit of infinitely many past observations. In practice, if we have $k$ observations the best we can do is to use all available observations and compute $f^{r_0}_{k}$. Fortunately, the set of $C^1$ embeddings is open in the $C^1$ topology, and $f^{r_0}_{k}$ converges to $f$ in this topology, so there exists a sufficiently large number $\ell$ of previous observations such that for all $k > \ell$, $f^{r_0}_{k}$ is an embedding.
\end{remark}

%%%%%%%%%%%%%%%%%%%%%%%%%%%%%%%%%%%%%%%%%%%%%%%%%%%% JHPD: 2/8/2019.

The ESN Embedding Conjecture also admits a biological interpretation. Consider an organism with a (primitive) nervous system (`brain') comprised of neurons. Neurons are connected to each other with random connection weights (including zero) representing the strength of the connection (or no connection). The adjacency matrix forms the reservoir matrix $A$. The reservoir state $r$ is a vector representing the firing rate of every neuron. Suppose that the organism has a sensory organ connected to the brain which at any point in time senses a scalar measure of the environment, for example an average environmental light intensity. The connection weight from the sensory organ to the $i$th neuron is then the $i$th entry of $W^{\text{in}}$. Suppose that the light intensity depends on the state of the environment which evolves as a high dimensional dynamical system. Then the nervous system and sensory organ together operate as an ESN. Since the entries of $A$ and $W^{\text{in}}$ are random variables, the ESN Embedding Conjecture states that the dynamics of the environment are indeed embedded into the nervous system without the nervous system needing to possess any special structure provided by learning or natural selection. The embedding of the natural world into the brain is obtained `for free'. This leaves cognition, defined as `the art of performing computation on our representation of the environment', as the faculty that requires optimisation by natural selection or learning.

\subsection{The ESN Approximation Theorem}
\label{sec:approx}

In this section we will state and prove the ESN Approximation Theorem - that an ESN which successfully embeds a dynamical system into the reservoir space can approximate the system's dynamics during the autonomous phase, hence replicate the topology of a structurally stable dynamical system. We will use several preliminary results introduced over the proceeding subsections.

\subsubsection{The Universal Approximation Theorem}

The first major result we will use to prove the ESN Approximation Theorem is the Universal Approximation Theorem. This theorem is highly celebrated in the literature on mathematical analysis of neural networks, and states that smooth functions and any number of their derivatives can be approximated by single layer neural network with sufficiently many neurons. In this section we recall this theorem and then present an extension suitable for ESNs, to take account of the fact that for an ESN the neural network weights $v_i$ and biases $b_i$ are randomly chosen but then fixed; only the output weights $w_i$ can be chosen to give a good approximation to an input function $f$. We will use the Universal Approximation Theorem presented by \cite{HORNIK1990551}, because it concerns smooth functions \emph{and any number of their derivatives} while the earlier seminal paper by \cite{Cybenko1989} does not.

\begin{defn}
    ($\ell$-finite) Let $\ell \in \mathbb{N}_0$. Then we say an $\ell$-times differentiable scalar function $\sigma \in C^{\ell}(\mathbb{R})$ is $\ell$-finite if
    \begin{align}
        0 < \int_{\mathbb{R}} \bigg| \frac{d^{\ell} \sigma}{d x^{\ell}} \bigg| dx < \infty. \nn 
    \end{align}
\end{defn}

\begin{remark}
    The activation function $\sigma \in C^1(\mathbb{R},(-1,1))$ with derivative in the range $(0,1)$ is $1$-finite; meaning $\ell$-finite with $\ell = 1$.
\end{remark}

\begin{theorem}
    (Universal Approximation Theorem) If the activation function $\sigma$ is $\ell$-finite, then for all $0 \leq m \leq \ell$  functions $g : I_n \to \mathbb{R}$ of the form
    \begin{align}
        g(x) = \sum_{j = 1}^N w_j \sigma ( v_j^{\top}x + b_j) \nn 
    \end{align}
    are dense in $C^{m}(I_n,\mathbb{R})$.
\end{theorem}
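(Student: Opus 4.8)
The plan is to reduce the statement to a one-dimensional Fourier-analytic approximation and to use the single nontrivial hypothesis, $\ell$-finiteness of $\sigma$, to control derivatives up to order $m\le\ell$. I would proceed in three stages: (i) reduce approximation of a general $f\in C^m(I_n,\mathbb{R})$ to approximation of the pure frequencies $x\mapsto\cos(a^\top x+\theta)$; (ii) reduce each such frequency, being a ridge function, to a univariate problem; and (iii) approximate the resulting one-dimensional trigonometric function in the $C^m$ norm by sums $\sum_j w_j\sigma(\alpha_j t+\beta_j)$. Since the admitted functions form a real linear space, finite linear combinations and the triangle inequality let me pass from single frequencies back to trigonometric polynomials and hence to arbitrary $f$.

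For stage (i) I would first establish that trigonometric polynomials are dense in $C^m(I_n,\mathbb{R})$: given $f\in C^m$, extend and mollify to get a smooth function that is $C^m$-close to $f$, then truncate its multivariate Fourier series with a Fejér- or Jackson-type kernel so the partial sums converge to the smooth approximant in $C^m$; composing these estimates shows any $f$ is $C^m$-approximated by finite combinations of $x\mapsto\cos(\pi k^\top x)$ and $\sin(\pi k^\top x)$. For stage (ii), writing $a=c\,u$ with $\|u\|=1$, the function $x\mapsto\cos(c\,u^\top x+\theta)$ depends only on the scalar $t=u^\top x$, which ranges over a compact interval $[-R,R]$ as $x$ runs over $I_n$. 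Each building block $\sigma(\alpha u^\top x+\beta)=\sigma(\alpha t+\beta)$ is a ridge function in the same direction $u$, so an approximation in $C^m([-R,R])$ in the variable $t$ pulls back to one in $C^m(I_n)$ in $x$, the chain rule only introducing factors bounded by powers of $\|u\|=1$. This reduces everything to the univariate claim: for every $R$ and every $0\le m\le\ell$, $\mathrm{span}\{\,\sigma(\alpha t+\beta):\alpha,\beta\in\mathbb{R}\,\}$ is dense in $C^m([-R,R])$.

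Stage (iii) is the crux and is where $\ell$-finiteness enters. Since $\sigma^{(\ell)}\in L^1(\mathbb{R})\setminus\{0\}$, its Fourier transform $\widehat{\sigma^{(\ell)}}$ is continuous and not identically zero; because $\widehat{\sigma^{(\ell)}}(\xi)=(\mathrm{i}\xi)^\ell\hat\sigma(\xi)$ as tempered distributions (the bounded $\sigma$ being tempered), $\hat\sigma$ is a nonvanishing continuous function on some open frequency window $J$ disjoint from the origin. I would then approximate the convolution $t\mapsto\int_{\mathbb{R}}\sigma(\alpha t+\beta)\,g(\beta)\,d\beta$ by Riemann sums $\sum_j w_j\sigma(\alpha t+\beta_j)$, which are of the admitted form, and choose the internal scale $\alpha$ and a smooth, rapidly decaying weight $g$ so that the Fourier content of the convolution is a narrow bump centred at the target frequency $c$. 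The freedom to rescale $\alpha$ is precisely what places the fixed window $J$ over the prescribed $c$, circumventing any zeros of $\hat\sigma$; constants (frequency $0$) are produced trivially, since $\sigma(0\cdot t+\beta)$ is already constant.

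The main obstacle is the quantitative $C^m$ control in stage (iii): I must differentiate the integral representation up to order $m\le\ell$ and bound the Riemann-sum discretisation error \emph{together with its first $m$ derivatives}, not merely its values. The hypothesis $0<\int|\sigma^{(\ell)}|<\infty$ is exactly what makes this legitimate, guaranteeing that differentiation under the integral and convolution identities such as $(\sigma\ast g)^{(\ell)}=\sigma\ast g^{(\ell)}$ are valid and uniformly bounded; it is also the reason the conclusion is restricted to $m\le\ell$, since mere continuity of $\sigma$ would control values but not derivatives. Propagating the $L^1$ bound on the top derivative to uniform control of the intermediate $\sigma^{(k)}$ for $0\le k\le m$ over the frequency-localised construction is the delicate estimate; everything else is bookkeeping. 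I would then reassemble, combining the univariate $C^m$ approximation of each cosine with the ridge pullback of stage (ii) and the trigonometric-polynomial density of stage (i) to conclude density in $C^m(I_n,\mathbb{R})$.
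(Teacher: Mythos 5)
The paper does not prove this theorem itself --- its ``proof'' is the citation to Hornik, Stinchcombe and White (1990) --- and your plan is essentially a reconstruction of that cited argument: density of trigonometric polynomials in $C^m$, reduction of each frequency $\cos(a^\top x+\theta)$ to a univariate ridge problem, and exploitation of the continuity and non-vanishing of $\widehat{\sigma^{(\ell)}}$ on a window away from the origin, together with dilations, to synthesise cosines from superpositions of $\sigma$. Since your route matches the one the paper defers to, all that remains is the bookkeeping you acknowledge: for instance, $\ell$-finiteness forces $\sigma^{(\ell-1)}$ to be bounded and the lower-order derivatives to grow at most polynomially, which is what legitimises the convolution and Riemann-sum estimates in the $C^m$ norm, and arbitrary phases $\theta$ are obtained by absorbing a translation of $t$ into the bias $\beta$.
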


\begin{proof}
    \cite{HORNIK1990551}. 
\end{proof}
    
    The Universal Approximation Theorem essentially states that if we are interested in approximating a function $f$ to some tolerance $\epsilon$ we can create a neural network of size $N$ and modify the weights until the network approximates $f$ to the tolerance $\epsilon$. We want to slightly extend the theorem for our purposes.
    Recall that an ESN has random reservoir weights comprising the matrix $A$ and random input weights comprising the matrix $W^{\text{in}}$, and it is only the output connection weights $W^{\text{out}}$ that are trained. 
    We therefore want to show that for any continuously differentiable function $f$ and a sufficiently large neural network with random weights $v_i$ and biases $b_i$, we can choose linear readout weights $w_i$ such that the resulting neural network approximates $f$ arbitrarily well with probability arbitrarily close to 1. We will call this the Random Universal Approximation Theorem (RUAT), and remark that the RUAT is highly related to Theorem 2.1 appearing in the seminal paper on Extreme Learning Machines by \cite{HUANG2006489}. We can also view the RUAT as a special case of Theorem 1 presented by \cite{Gonon2020}, who prove a stronger result in the more general context of filters.
    
    The idea behind the proof of the RUAT is as follows. First we note that there is a neural network $\hat{g}$ that approximates $f$ by the Universal Approximation Theorem. Then, we create sample sequences of weights and biases $v_i,b_i$ by repeated draws from appropriate random variables. There will eventually be some randomly generated samples $v_j,b_j$ that are close to each of the weights and biases of the network $\hat{g}$. From this list of weights and biases in the sample sequences we select those that match closely, and so create a neural network $g$, choosing linear readout weights $w_i$ either to match the respective weight in $\hat{g}$ or choosing to set $w_i = 0$ in order effectively to discard those samples $v_i, b_i$ that not close to values in $\hat{g}$. Now by construction $g$ is a good approximation to $\hat{g}$ which is itself a good approximation to $f$. The details are presented in the following lemma and theorem.
    
    \begin{lemma}
\label{RUAT_lemma}
    Let $(X_j)_{j \in \mathbb{N}}$ be a sequence of i.i.d. random variables and $S_1, \ldots, S_{\ell}$ be a list of $\ell$ events, and suppose that for each $i$ (and for any $j$ since they are i.i.d.) there exists $\theta_i$ such that $\mathbb{P}(X_j \in S_i) = \theta_i > 0$. Then for all $\alpha \in (0,1)$ there exists $N \in \mathbb{N}$ such that
    \begin{align}
        \mathbb{P}\big( \exists \text{ injective } \phi : \{ 1, \ldots, \ell \} \to \{ 1, \ldots, N \} : X_{\phi(i)} \in S_i, \ \ \forall \ i \in \{ 1, \ldots, \ell \} \big) > \alpha. \nn 
    \end{align}
\end{lemma}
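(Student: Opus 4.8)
The plan is to reduce the existence of a single global injective assignment to a collection of $\ell$ \emph{independent} local existence events, by partitioning the index set into disjoint blocks with one block dedicated to each event $S_i$. This device sidesteps any combinatorial matching difficulty and makes the injectivity of $\phi$ automatic rather than something that must be engineered.

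Concretely, I would set $\theta := \min_{1 \le i \le \ell} \theta_i$, which is strictly positive since each $\theta_i > 0$ and there are only finitely many of them. For a parameter $n \in \mathbb{N}$ I would take $N = \ell n$ and split $\{1, \ldots, N\}$ into $\ell$ consecutive disjoint blocks $B_1, \ldots, B_\ell$ of cardinality $n$, for instance $B_i = \{(i-1)n + 1, \ldots, in\}$. For each $i$ let $E_i$ be the event that at least one index $j \in B_i$ satisfies $X_j \in S_i$.

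The key observation is that on the intersection $\bigcap_{i=1}^\ell E_i$ one can build the desired map: for each $i$ pick any $j \in B_i$ with $X_j \in S_i$ and set $\phi(i) = j$. Because the blocks are pairwise disjoint, the chosen indices are automatically distinct, so $\phi$ is injective and satisfies $X_{\phi(i)} \in S_i$ for every $i$. Hence the probability in the statement is bounded below by $\mathbb{P}\big(\bigcap_i E_i\big)$. Since each $E_i$ depends only on the variables indexed by $B_i$ and the blocks are disjoint, the events $E_1, \ldots, E_\ell$ are independent, and a direct computation using the i.i.d.\ hypothesis gives
\begin{align}
    \mathbb{P}\Big( \bigcap_{i=1}^\ell E_i \Big) = \prod_{i=1}^\ell \big(1 - (1-\theta_i)^n\big) \ge \big(1 - (1-\theta)^n\big)^\ell. \nn
\end{align}

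Finally, since $0 < \theta \le 1$ we have $(1-\theta)^n \to 0$ as $n \to \infty$, so the right-hand side tends to $1$; I can therefore choose $n$ (and hence $N = \ell n$) large enough that $\big(1 - (1-\theta)^n\big)^\ell > \alpha$, completing the argument. The only point needing genuine care is guaranteeing that the assignment is injective, and I expect this to be the main conceptual obstacle: a naive argument would tangle itself in the dependence between the events "$X_j \in S_i$" for different $i$ sharing the same $j$, or would invoke Hall's theorem. The block-partition trick resolves this cleanly and reduces the remainder to an elementary independence calculation and a limit.
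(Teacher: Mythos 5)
Your proposal is correct and follows essentially the same strategy as the paper's own proof: both partition $\{1,\ldots,N\}$ into $\ell$ disjoint blocks, dedicate one block to each event $S_i$ so that injectivity is automatic and the block events are independent, and then lower-bound the probability by a product of the form $\prod_i\bigl(1-(1-\theta_i)^{|B_i|}\bigr)$. The only difference is cosmetic: the paper chooses block sizes explicitly via $\operatorname{ceil}\bigl(\log(1-\alpha^{1/\ell})/\log(1-\theta_i)\bigr)+1$ to land directly above $\alpha$, whereas you use uniform blocks and a limiting argument, which is equally valid.
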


\begin{proof}
    First, fix $\alpha \in (0,1)$.
    %and let $\mathbb{P}(X_j \in S_i) := \theta_i \ \forall j$.
    Then define the set $\{n_0, \ldots, n_\ell\}$ as follows.
    Set $n_0 = 0$ and for any $i \in \{ 1 , ... , \ell \}$ let
    \begin{align}
        n_i - n_{i-1} := \text{ceil}\bigg( \frac{\log(1 - \alpha^{1 / \ell})}{\log(1-\theta_i)} \bigg) + 1. \nn 
    \end{align}
    Finally, set $N=n_{\ell}$. Then we can calculate that
    
\ba 
\mathbb{P} \big( \exists \text{ injective } \phi 
%: \{ 1, \ldots, \ell \} \to \{ 1, \ldots, n \} 
: X_{\phi(i)} \in S_i \ \forall i \in \{1, \ldots, \ell \} \big) 
%\nn \\
%\phantom{a} \hspace{-2.5cm} 
& > &
\mathbb{P} \big(\forall \ i \in \{ 1, \ldots, \ell \} \ \exists \ j \in \{1 + n_{i-1}, \ldots, n_i \} : X_j \in S_i \big) \nn \\ 
& = &
        \prod_{i=1}^{\ell} \mathbb{P}\big( \exists j \in \{1 + n_{i-1}, \ldots, n_i \} : X_j \in S_i \big) \nn \\ 
        & = &
        \prod_{i=1}^{\ell} 1 - \mathbb{P}\big( X_j \notin S_i \ \forall j \in \{1 + n_{i-1}, \ldots, n_i \} \big) \nn \\ 
        & \geq & 
        \prod_{i=1}^{\ell} 1 - (1 - \theta_i)^{n_i - n_{i-1}} \nn \\ 
        & = &
        \prod_{i=1}^{\ell} 1 - (1 - \theta_i)^{\text{ceil}\big(\log(1 - \alpha^{1 / \ell}) / \log(1-\theta_i) \big) + 1} \nn \\ 
        & > &
        \prod_{i=1}^{\ell} 1 - (1 - \theta_i)^{\big(\log(1 - \alpha^{1 / \ell}) / \log(1-\theta_i) \big)} \nn \\ 
        & = & 
        \prod_{i=1}^{\ell} 1 - \exp\bigg(\frac{\log(1 - \alpha^{1 / \ell})}{\log(1-\theta_i)}\log(1 - \theta_i)\bigg) \nn \\ 
        & = & \prod_{i=1}^{\ell} 1 - (1 - \alpha^{1/\ell}) = \prod_{i=1}^{\ell} \alpha^{1/\ell} = \alpha. \nn
    \ea
    
\end{proof}

\begin{theorem} (Random Universal Approximation Theorem)
    Let $I_n$ denote the unit hypercube of dimension $n$ and let $f \in C^1(I_n , \mathbb{R})$.
    Let $\sigma \in C^1(\mathbb{R})$ be $1$-finite, and let $(b_j)_{j \in \mathbb{N}}$, $(v_j)_{j \in \mathbb{N}}$ be sequences of i.i.d. random variables with full support.  Then for any $\alpha \in (0,1)$ and $\epsilon > 0$ there exists some natural number $N \in \mathbb{N}$ such with, probability greater than $\alpha$, there exist real numbers $w_1, \ldots, w_N \in \mathbb{R}$ such that the \emph{random neural network} $g : I_n \to \mathbb{R}$ defined by
    \begin{align}
        g(x) = \sum_{j=1}^{N} w_j \sigma (v_j^{\top} x + b_j) \nn 
    \end{align}
    satisfies
    \begin{align}
    \lVert f - g \rVert_{C^1} < \epsilon. \nn
    \end{align}
    \label{blehh}
\end{theorem}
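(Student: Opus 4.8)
The plan is to marry the deterministic Universal Approximation Theorem with the combinatorial probability estimate already established in Lemma~\ref{RUAT_lemma}. First I would invoke the Universal Approximation Theorem in the $C^1$ topology, which is permissible because $\sigma$ is $1$-finite, to obtain a deterministic network
\[
\hat g(x) = \sum_{i=1}^{\ell} \hat w_i\, \sigma(\hat v_i^{\top} x + \hat b_i)
\]
with $\lVert f - \hat g \rVert_{C^1} < \epsilon/2$. The whole remaining task is then to show that, with probability exceeding $\alpha$, the random samples $(v_j,b_j)$ contain, for each $i$, one term whose basis function $x \mapsto \sigma(v_j^{\top}x + b_j)$ is $C^1$-close to the target basis function $x \mapsto \sigma(\hat v_i^{\top}x + \hat b_i)$. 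Copying the coefficient $\hat w_i$ onto that sample and setting every unused output weight to zero yields a network $g$ with $\lVert \hat g - g \rVert_{C^1} < \epsilon/2$, and the triangle inequality finishes the bound $\lVert f - g\rVert_{C^1} < \epsilon$.

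The deterministic ingredient I would establish is the continuity of the evaluation map $(v,b) \mapsto \bigl[\, x \mapsto \sigma(v^{\top}x + b)\,\bigr]$ from $\mathbb{R}^{n+1}$ into $C^1(I_n,\mathbb{R})$. Since $I_n$ is compact and $\sigma \in C^1$, both $\sigma$ and $\sigma'$ are uniformly continuous on the compact range of values that $v^{\top}x + b$ takes, so that
\[
\sup_{x \in I_n}\left| \sigma(v^{\top}x + b) - \sigma(\hat v^{\top}x + \hat b)\right|
\quad\text{and}\quad
\sup_{x \in I_n}\left\lVert \sigma'(v^{\top}x + b)\,v^{\top} - \sigma'(\hat v^{\top}x + \hat b)\,\hat v^{\top}\right\rVert
\]
both tend to $0$ as $(v,b) \to (\hat v,\hat b)$. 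Consequently, for each $i$ there is an open ball $S_i \subset \mathbb{R}^{n+1}$ centred at $(\hat v_i, \hat b_i)$ such that $(v,b) \in S_i$ forces $\lVert \sigma(v^{\top}\cdot + b) - \sigma(\hat v_i^{\top}\cdot + \hat b_i)\rVert_{C^1} < \epsilon / \bigl(2\ell\,(1 + \max_i|\hat w_i|)\bigr)$, a radius chosen so that the accumulated error $\sum_i |\hat w_i|\,\lVert\,\cdot\,\rVert_{C^1}$ stays below $\epsilon/2$.

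For the probabilistic step I would set $X_j := (v_j,b_j)$; the full-support hypothesis guarantees $\mathbb{P}(X_j \in S_i) = \theta_i > 0$ for each of the finitely many balls $S_i$, so Lemma~\ref{RUAT_lemma} applies directly. For the given $\alpha$ it supplies an $N$ such that, with probability greater than $\alpha$, there is an injective $\phi:\{1,\dots,\ell\}\to\{1,\dots,N\}$ with $X_{\phi(i)} \in S_i$ for all $i$. On this event I define $g(x) = \sum_{j=1}^N w_j\,\sigma(v_j^{\top} x + b_j)$ by $w_{\phi(i)} = \hat w_i$ and $w_j = 0$ otherwise, whence $\lVert \hat g - g \rVert_{C^1} \le \sum_{i=1}^{\ell} |\hat w_i|\, \lVert \sigma(v_{\phi(i)}^{\top}\cdot + b_{\phi(i)}) - \sigma(\hat v_i^{\top}\cdot + \hat b_i)\rVert_{C^1} < \epsilon/2$ by the choice of radii.

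The entire probabilistic difficulty has been outsourced to Lemma~\ref{RUAT_lemma}, so the real work is the deterministic $C^1$ approximation, and the subtlety I expect to handle with care is the control of derivatives: matching function values is routine, but the $C^1$ norm demands that $\sigma'(v^{\top}x + b)\,v^{\top}$ be uniformly close to $\sigma'(\hat v^{\top}x + \hat b)\,\hat v^{\top}$ over all $x \in I_n$, which is precisely where continuity of $\sigma'$ and compactness of $I_n$ are indispensable. A secondary point worth making explicit is why injectivity of $\phi$ matters: each random sample can carry only a single output weight, so distinct target nodes must be approximated by distinct samples, and the injective matching furnished by the lemma is exactly what makes the coefficient-copying construction well defined.
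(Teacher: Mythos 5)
Your proposal is correct and follows essentially the same route as the paper's own proof: invoke the $C^1$ Universal Approximation Theorem for a deterministic network $\hat g$, define one event $S_i$ per node of $\hat g$, apply Lemma~\ref{RUAT_lemma} to the i.i.d.\ pairs $(v_j,b_j)$, and copy the weights $\hat w_i$ onto the matched samples while zeroing the rest. If anything, your version is slightly tighter than the paper's: you explicitly prove the continuity of $(v,b)\mapsto\sigma(v^{\top}\cdot+b)$ into $C^1(I_n,\mathbb{R})$, which is what justifies that the sets $S_i$ are open and hence have positive probability under full support (the paper asserts this), and your use of $|\hat w_i|$ and $1+\max_i|\hat w_i|$ avoids the sign issue in the paper's bound involving $\max_k(\hat w_k)$.
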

\begin{proof}
    First, by the Universal Approximation Theorem we know that for any $\epsilon > 0$ there exists a neural network $\hat{g} : I_n \to \mathbb{R}$ of size $\ell$ defined by
\begin{align}
        \hat{g} (x) = \sum_{i = 1}^\ell \hat{w}_i \sigma(\hat{v}_i^{\top} x + \hat{b}_i) \nn 
\end{align}
    such that
    \begin{align}
    \lVert f - \hat{g} \rVert_{C^1} < \frac{\epsilon}{2}. \label{eqn:fghat} 
    \end{align}
Now, consider two sequences of i.i.d. random variables $(b_j)_{j \in \mathbb{N}}$ and $(v_j)_{j \in \mathbb{N}}$ with full support, and let $X_j := (b_j , v_j)$. Fix $\epsilon>0$  and define a collection of $\ell$ events $S_1 , ... , S_{\ell}$ by
\begin{align}
    S_i := \bigg\{ (b,v) \in \mathbb{R} \times \mathbb{R}^n : \lVert \sigma(\hat{v}_i^{\top} \cdot + \hat{b}_i) - \sigma(v^{\top} \cdot + b) \rVert_{C^1} < \frac{\epsilon}{2 \ell \max_k(\hat{w}_k)} \bigg\}, \nn 
\end{align}
where the weights $\hat{w}_k$ are given by the form of the network $\hat{g}$.
Observe that each of the $S_i$ have strictly positive measure, so there exists
$\theta_i>0$ such that $\mathbb{P}(X_j \in S_i) > \theta_i > 0 \ \forall j \in \mathbb{N}$. 
Hence it follows by Lemma~\ref{RUAT_lemma} that for all $\alpha \in (0,1)$ there exists $N \in \mathbb{N}$ such that
    \begin{align}
        \mathbb{P}\big( \exists \text{ injective } \phi : \{ 1, \ldots, \ell \} \to \{ 1, \ldots, N \} : X_{\phi(i)} \in S_i \ \forall i \in \{1, \ldots, \ell \} \big) > \alpha. \nn 
    \end{align}
Now, on the event
\begin{align}
    \exists \text{ injective } \phi : \{ 1, \ldots, \ell \} \to \{ 1, \ldots, N \} : X_{\phi(i)} \in S_i \ \forall i \in \{1, \ldots, \ell \} \nn 
\end{align}
we define
\begin{align}
    w_j := 
    \begin{cases}
        \hat{w}_i \text{ if } \phi(i) = j \\
        0 \text{ otherwise }
    \end{cases} \nn 
\end{align}
for all $j \in \{ 1, \ldots, N \}$, and define the \emph{random neural network} $g:I_n \to \mathbb{R}$ by
\begin{align}
    g(x) = \sum^{N}_{j=1}w_j \sigma( v_j^{\top} x + b_j ). \nn
\end{align}
Now observe
\ba
\lVert \hat{g} - g \rVert_{C^1}
    & = & \bigg\lVert \sum_{i = 1}^\ell \hat{w}_i \sigma(\hat{v}_i^{\top} \cdot + \hat{b}_i) - \sum_{j = 1}^{N}w_j\sigma(v^{\top}_j \cdot + b_j) \bigg\rVert_{C^1} \nn \\
    & = & \bigg\lVert \sum_{i = 1}^\ell \hat{w}_i \big( \sigma ( \hat{v}_i^{\top} \cdot + \hat{b}_i ) - \sigma \big( v_{\phi(i)}^{\top} \cdot + b_{\phi(i)} \big)  \big) \bigg\rVert_{C^1} \nn \\ 
    & \leq & \sum_{i = 1}^\ell \hat{w}_i \bigg\lVert \big( \sigma ( \hat{v}_i^{\top} \cdot + \hat{b}_i ) - \sigma \big( v_{\phi(i)}^{\top} \cdot + b_{\phi(i)} \big)  \big) \bigg\rVert_{C^1} \nn \\
    & < & \sum_{i = 1}^\ell \frac{\hat{w}_i \epsilon}{2 \ell \max_k(\hat{w}_k)} < \frac{\epsilon}{2}. \nn
\ea
Combining this with~\eqref{eqn:fghat} and using the triangle inequality we obtain
\ba 
    \lVert f - g \rVert_{C^1}
    \leq \lVert f - \hat{g} \rVert_{C^1} + \lVert \hat{g} - g \rVert_{C^1}  < \frac{\epsilon}{2} + \frac{\epsilon}{2} = \epsilon, \nn
\ea 
which completes the proof.
\end{proof}

\subsubsection{The ESN Approximation Theorem}

In this subsection we will state and prove the ESN Approximation Theorem which states that there exists a linear readout layer $W^{\text{out}}$ giving rise to an autonomous ESN phase with a normally hyperbolic attracting $m$-submanifold on which the autonomous dynamics are topologically conjugate to a structurally stable $\phi$. The idea behind the theorem is observe that the ESN looks enough like a single layer neural network that the Random Universal Approximation Theorem holds. Consequently we can choose linear readout weights stored in the matrix $W^{\text{out}}$ to approximate any $C^1$ function. We will assume that $f$ is an embedding, and therefore invertible on its image, and choose readout weights $W^{\text{out}}$ such that the autonomous ESN approximates a $C^1$ dynamical system possessing an $m$ dimensional normally hyperbolic attracting submanifold on which the dynamics approximate $f \circ \phi \circ f^{-1}$. We want the manifold to be normally hyperbolic and attracting to ensure that an autonomous trajectory that leaves the manifold by some small distance is attracted back toward the manifold, preventing an accumulation of errors from sending the trajectory too far away. Autonomous trajectories originating near the manifold therefore remain near, all the while approximating $f \circ \phi \circ f^{-1}$. To formalise these ideas, we will first define a normally hyperbolic attracting submanifold.
\begin{defn}
    (Normally Hyperbolic Attracting Submanifold) Let $\phi \in \text{Diff}^1(M)$, then, a $\phi$-invariant submanifold $\Lambda \subset M$ is a normally hyperbolic attracting submanifold if the restriction to $\Lambda$ of the tangent bundle of $M$ admits a splitting into a direct sum of two $D\phi$-invariant subbundles, the tangent bundle of $\Lambda$, and the stable bundle $E^s$. Furthermore, with respect to some Riemannian metric on $M$, the restriction of $D\phi$ to $E^s$ must be a contraction, and must be relatively neutral on $T\Lambda$. Thus, there exist constants $0 < \lambda < \mu^{-1} < 1$ and $c > 0$ such that 
    \begin{align}
        T_{\Lambda}M &= T\Lambda \oplus E^s \nn \\
        (D\phi)_xE^s_x &= E^s_{\phi(x)} \ \forall x \in \Lambda \nn \\
        \lVert D\phi^k v \rVert &\leq c\lambda^k \lVert v \rVert \ \forall v \in E^s, \ \forall k \in \mathbb{N} \nn  \\
        \lVert D \phi^k v \rVert &\leq c \mu^{\lvert k \rvert} \lVert v \rVert. \nn
    \end{align}
\end{defn}

Before we present the ESN Approximation Theorem itself we will prove that there exists a $C^1$ evolution operator $\eta$ defined on $\mathbb{R}^d$ that has a normally hyperbolic attracting submanifold on which the dynamics of $\eta$ are conjugate to $\phi$. The existence of this map $\eta$ is guaranteed by standard topological machinery which we recall briefly here, and which is presented in detail by \cite{WarnerManifolds}.

\begin{defn}
    (Cubic centred chart) A chart $(V,\varphi)$ belonging to a $d$-manifold is called a cubic chart if $\varphi(V)$ is an open cube centred about the origin in $\mathbb{R}^d$. If $x \in V$ and $\varphi(x) = 0$, then the chart $(V,\varphi)$ is centred at $x$.
\end{defn}

\begin{defn}
    (Slice coordinates) Suppose that $(V,\varphi)$ is a chart on a $d$-manifold $D$ with coordinate functions $x_1 , ... , x_d$ and that $m$ is an integer $0 \leq m \leq d$. Let $a \in \varphi(V)$ and let
    \begin{align}
        S = \{ q \in V \mid x_i(q) = a_i , i = m+1 , ... , d \}. \nn
    \end{align}
    The subspace $S$ of $D$ together with coordinate maps $x\rvert_{S}$ for $j = 1 , ... , m$
    forms a submanifold of $D$, called a slice of the chart $(V,\varphi)$.
\end{defn}

\begin{lemma}
    (Slice Lemma) Let $M$ be a compact $m$-manifold, let $f : M \to \mathbb{R}^d$ be an immersion, and let $x \in M$. Then there exists a cubic centred chart $(V,\varphi)$ about $f(x)$ and a neighbourhood $U$ of $x$ such that $f\rvert_{U}$ is injective and $f(U)$ is a slice of $(V,\varphi)$. 
\end{lemma}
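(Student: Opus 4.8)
The plan is to establish the standard local canonical form for an immersion, namely that near $x$ the map $f$ looks like the linear inclusion $\mathbb{R}^m \hookrightarrow \mathbb{R}^d$, $a \mapsto (a,0)$, and then simply read off the slice and the cube by shrinking. First I would pick a chart $(U_0,\psi)$ on $M$ centred at $x$ (so $\psi(x)=0$) and set $\tilde f := f \circ \psi^{-1} : \psi(U_0) \subseteq \mathbb{R}^m \to \mathbb{R}^d$; because $f$ is an immersion, $D\tilde f_0$ is injective and hence has rank $m$. After a linear change of the target coordinates on $\mathbb{R}^d$ (for instance a permutation) I may assume that, writing $\tilde f = (\tilde f^{(1)}, \tilde f^{(2)})$ with $\tilde f^{(1)}$ the first $m$ components and $\tilde f^{(2)}$ the remaining $d-m$, the square block $D\tilde f^{(1)}_0$ is invertible.

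Second, I would \emph{thicken} $\tilde f$ to a map between $d$-dimensional domains. Define $G$ on $\psi(U_0) \times \mathbb{R}^{d-m}$ by $G(a,b) = (\tilde f^{(1)}(a), \tilde f^{(2)}(a) + b)$. Its derivative at $(0,0)$ is block lower-triangular with diagonal blocks $D\tilde f^{(1)}_0$ and $I_{d-m}$, hence invertible. By the inverse function theorem $G$ restricts to a diffeomorphism from a neighbourhood of $(0,0)$ onto a neighbourhood $V$ of $G(0,0)=\tilde f(0)=f(x)$ in $\mathbb{R}^d$; let $\varphi := G^{-1} : V \to \mathbb{R}^d$ be the inverse, which is a chart of the $d$-manifold $\mathbb{R}^d$ about $f(x)$ with $\varphi(f(x))=0$.

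Third, I would read off the conclusion in this chart. For $a$ near $0$ we have $\tilde f(a) = G(a,0)$, so $\varphi(f(\psi^{-1}(a))) = (a,0)$; that is, in the coordinates $\varphi$ the map $f$ is exactly $a \mapsto (a,0)$. Writing $x_1,\dots,x_d$ for the coordinate functions of $\varphi$, the image of $f$ satisfies $x_i = 0$ for $i > m$, so it lies in the slice $S = \{ q \in V : x_i(q)=0,\ i=m+1,\dots,d \}$, and conversely every point of $S$ has the form $G(c,0)=f(\psi^{-1}(c))$. Finally I would shrink: choose an open cube $Q$ centred at $0$ with $Q \subseteq \varphi(V)$, replace $V$ by $\varphi^{-1}(Q)$ so that the chart becomes cubic centred, and set $U := \psi^{-1}(\{ a : (a,0) \in Q \})$. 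Then $f(U)$ is exactly the slice of this cube, and since $f|_U$ equals $\psi^{-1}$ followed by the injection $a \mapsto (a,0)$, it is injective.

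As for the main obstacle, there is no deep difficulty once the thickening map $G$ is set up: everything follows from the inverse function theorem. The only point needing care is that making the top $m \times m$ Jacobian block invertible requires a linear change of the target coordinates, and one must check (routinely) that such a change carries cubic centred charts to cubic centred charts and slices to slices, so that the reduction is legitimate. Verifying invertibility of $DG_{(0,0)}$ and shrinking to a cube are then routine bookkeeping.
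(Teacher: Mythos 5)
Your proof is correct: the thickening map $G(a,b) = (\tilde f^{(1)}(a), \tilde f^{(2)}(a)+b)$ together with the inverse function theorem is exactly the standard local canonical form argument for immersions, and the permutation-of-target-coordinates and cube-shrinking steps you flag are indeed routine, since both get absorbed into the chart map $\varphi$. The paper gives no proof of its own here — it simply cites Warner's Proposition 1.35 — and your argument is in substance the same one found in that reference, so the proposal matches the paper's (outsourced) proof.
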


\begin{proof}
    \cite{WarnerManifolds} page 28 prop 1.35.
\end{proof}

\begin{lemma}
    Let $d > m$ and $M$ be a compact $m$-manifold. Let $\phi \in \text{Diff}^1(M)$. Suppose $f \in C^1(M,\mathbb{R}^d)$ is a $C^1$ embedding. Then there is an open subset $\Omega \in \mathbb{R}^d$ and $\eta \in \text{Diff}^1(\Omega)$ with $f(M)$ a normally hyperbolic attracting submanifold such that $\eta\lvert_{f(M)} = f \circ \phi \circ f^{-1}$ (where we have defined $f^{-1}$ on the image of $f$).
    \label{manifold_lemma}
\end{lemma}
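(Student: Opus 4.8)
The plan is to realise \(f(M)\) as the zero section of a tubular neighbourhood and to build \(\eta\) so that it acts by the conjugated map \(\psi := f\circ\phi\circ f^{-1}\) along \(f(M)\) while contracting sharply in the transverse directions; the four conditions in the definition of a normally hyperbolic attracting submanifold will then follow once the transverse contraction is made strong enough. Note first that \(\psi\) is a genuine \(C^1\) diffeomorphism of \(f(M)\) onto itself: because \(f\) is a \(C^1\) embedding, \(f^{-1}\colon f(M)\to M\) is \(C^1\), so \(\psi\) is a composition of \(C^1\) diffeomorphisms. Setting \(\eta|_{f(M)}=\psi\) is therefore forced, and \(f(M)\) is automatically \(\eta\)-invariant; the work is entirely in the normal directions.

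First I would produce the tubular structure. Applying the Slice Lemma at each point and using compactness of \(M\), cover \(f(M)\) by finitely many \(C^1\) slice charts in which \(f(M)\) is the slice \(\{\zeta=0\}\subset\mathbb{R}^m\times\mathbb{R}^{d-m}\); patching these with a subordinate partition of unity yields an open \(\Omega\supset f(M)\) together with a \(C^1\) retraction \(\pi\colon\Omega\to f(M)\) whose derivative along \(f(M)\) is the projection onto the tangent spaces \(T_pf(M)\). Next I would extend \(\psi\) to an ambient \(C^1\) diffeomorphism \(\Psi\) of \(\Omega\) with \(\Psi|_{f(M)}=\psi\); this is the standard topological machinery of Warner referred to above, available because \(d>m\) supplies transverse room. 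Define the fibrewise contraction \(c_\lambda(y):=\pi(y)+\lambda\,(y-\pi(y))\) for a small \(\lambda\in(0,1)\): this is a \(C^1\) diffeomorphism onto its image that fixes \(f(M)\) pointwise and scales normal displacements by \(\lambda\). Setting \(\eta:=\Psi\circ c_\lambda\) then gives a composition of \(C^1\) diffeomorphisms, hence a \(C^1\) diffeomorphism onto its image, with \(\eta|_{f(M)}=\psi\); and for \(\lambda\) small enough \(\eta(\Omega)\subset\Omega\), so forward iterates of points in \(\Omega\) are drawn towards \(f(M)\), giving the attracting property.

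It remains to extract the invariant splitting and the constants \(0<\lambda<\mu^{-1}<1\), \(c>0\) of the definition. Differentiating along \(f(M)\), the tangential block of \(D\eta_p\) is exactly \(D\psi_p\), which is \emph{relatively neutral} with rates bounded by \(\max(\lVert D\psi\rVert_\infty,\lVert D\psi^{-1}\rVert_\infty)\), while the transverse block sends \(N_p\) by \(\lambda\) times an isomorphism onto a complement of \(T_{\psi(p)}f(M)\). By choosing \(\lambda\) small enough that this transverse contraction dominates the fixed tangential dynamics, one secures the spectral gap, and the required stable bundle \(E^s\) together with the growth estimates then follows from the persistence and graph-transform theory of normally hyperbolic sets.

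I expect the main obstacle to be precisely the production of a genuinely \(D\eta\)-invariant stable bundle \(E^s\): the extension \(\Psi\) need not carry the normal space at \(p\) to the normal space at \(\psi(p)\), so \(D\eta_p\) generally contains a normal-to-tangential shear, and one cannot simply declare the orthogonal normal bundle invariant. Indeed, orthogonal transport between the normal spaces at \(p\) and \(\psi(p)\) can degenerate — for a quarter-turn of a round circle in \(\mathbb{R}^2\) the normal line at \(p\) coincides with the tangent line at \(\psi(p)\) — so invariance must be won dynamically rather than geometrically. The resolution is to exploit the free parameter \(\lambda\): driving the transverse contraction arbitrarily far below the tangential rates forces the normal-hyperbolicity inequalities and yields \(E^s\) as an attracting fixed point of the graph transform. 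That domination argument, not the topological assembly, is where the real content lies.
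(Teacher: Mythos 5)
Your proposal is correct in outline and reaches the same conclusion, but it assembles $\eta$ differently from the paper. The paper's proof also starts from the Slice Lemma and a partition of unity, but it blends \emph{local} maps directly: in each slice chart it defines $\eta_x$ to act by $f\circ\phi\circ f^{-1}$ on the slice coordinates and to divide the transverse coordinates by the fixed factor $2$, and then sets $\eta=\sum_k\alpha_k\eta_{x_k}$, asserting normal hyperbolicity of $f(M)$ ``by construction.'' You instead use the slice charts only to build a tubular structure (the retraction $\pi$), extend $\psi=f\circ\phi\circ f^{-1}$ to an ambient diffeomorphism $\Psi$, and take the composition $\eta=\Psi\circ c_\lambda$ with an adjustable fibrewise contraction rate $\lambda$. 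Your route buys two things the paper's argument glosses over. First, a composition of diffeomorphisms is automatically a diffeomorphism onto its image, whereas a partition-of-unity blend of local diffeomorphisms is not automatically injective or of full rank away from $f(M)$, so the paper's $\eta$ needs an extra argument there. Second, and more substantively, your free parameter $\lambda$ is genuinely needed: with the paper's fixed transverse rate $\tfrac12$, if $D\psi$ contracts some tangential direction at a rate below $\tfrac12$ (so that, running time backwards, the tangential bound $\mu$ in the definition exceeds $2$), the required inequality $\lambda<\mu^{-1}$ fails and $f(M)$ need not be normally hyperbolic for the paper's construction; taking $\lambda$ small compared with the tangential rates repairs this. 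Your closing observation---that $E^s$ cannot simply be taken to be the orthogonal normal bundle because $D\eta$ carries a normal-to-tangential shear, and must instead be produced by a graph-transform/domination argument once $\lambda$ is small enough---is exactly the content hidden behind the paper's ``by construction,'' and it is the right way to finish the proof.
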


\begin{proof}
    We will make a similar argument to \cite{WarnerManifolds} in the proof of his Proposition 1.36, on page 29. First let $x \in M$. Then by the Slice Lemma there exists a cubic centred chart $(V_x,\varphi_x)$ about $f(x)$ and a neighbourhood $U_x$ of $x$ such that $f(U_x)$ is a slice $(V_x,\varphi_x)$.    
    Let $x_1, \ldots, x_m$ be the slice coordinates in the chart $(V_x,\varphi_x)$ of points in $f(U_x)$.
    Then we can define a map $\eta_x \in \text{Diff}^1(V_x,\mathbb{R}^d)$ applying the map $f \circ \phi \circ f^{-1}$ on the slice co-ordinates and dividing the remaining co-ordinates by 2.   
    We can make this argument for every $x \in M$ hence define a collection of maps $\{ \eta_x \}$ over a collection of open sets $\{ V_x \}$ which cover $f(M)$. Now we let $\{ \alpha_j \ | \ j \in \mathbb{N} \}$ form a partition of unity subordinate to the cover $\{ V_x \}$. We take a subsequence $\{ \alpha_k \}$ such that $\text{supp}(\alpha_k) \cap f(M) \neq \emptyset$ and denote the collection of sets to which $\{ \alpha_k \}$ is subordinate
    by $\{ V_k \}$. We then define a map $\eta$ on a neighbourhood $\Omega : = \cup_{k} V_k$ of $f(M)$ by
    \begin{align}
        \eta = \sum_k \alpha_k \eta_x. \nn
    \end{align}
    By construction, $\eta\rvert_{f(M)} = f \circ \phi \circ f^{-1}$ and $\eta$ has a normally hyperbolic attracting submanifold $f(M)$.
    \end{proof}

Not only does the dynamical system $\eta$ exist, but importantly, its normally hyperbolic attracting submanifold is preserved by any sufficiently good approximation. This is made formal in the Invariant Manifold Theorem, which we will use in the proof of the ESN Approximation Theorem.

\begin{theorem}
    (Invariant Manifold Theorem) Let $K$ be a compact manifold and $\eta \in \text{Diff}^1(K)$ with normally hyperbolic attracting submanifold $\Lambda$. Then, $\exists \epsilon > 0$ such that for any $u \in \text{Diff}^1(K)$ with $\lVert \eta - u \rVert_{C^1} < \epsilon$, the diffeomorphism $u$ has a normally hyperbolic attracting submanifold $U$ such that $\lVert U - \Lambda \rVert_{C^1} < \epsilon$.
\end{theorem}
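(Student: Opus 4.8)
The plan is to prove this by the classical graph-transform (Hadamard) method, realising the perturbed invariant manifold of $u$ as the unique fixed point of a contraction on a space of sections of the normal bundle of $\Lambda$. This is precisely the persistence part of Hirsch--Pugh--Shub / Fenichel theory, so one could alternatively invoke those references directly; but since the argument is a fixed-point argument of exactly the flavour already used in Theorem~\ref{c1_theorem}, I will sketch it explicitly.

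First I would fix adapted coordinates near $\Lambda$. Using the $D\eta$-invariant splitting $T_\Lambda K = T\Lambda \oplus E^s$ together with the tubular neighbourhood theorem, I identify a neighbourhood of $\Lambda$ in $K$ with a neighbourhood of the zero section of the normal bundle $E^s$, writing points as pairs $(p,v)$ with $p \in \Lambda$ and $v \in E^s_p$. In these coordinates $\eta$ has, to leading order, the skew-product form $\eta(p,v) = (g(p), G_p v)$, where $g := \eta|_\Lambda$ is a diffeomorphism of $\Lambda$ and the fibre maps $G_p$ are contractions. After passing to an adapted (Lyapunov) metric so that the constant $c$ in the rate conditions is $1$, the normal-hyperbolicity definition gives $\|D\eta|_{E^s}\| \le \lambda$ on $E^s$ and $\|Dg^{-1}\| \le \mu$ on $T\Lambda$, with $\lambda\mu < 1$ because $\lambda < \mu^{-1}$.

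Next I would seek $U$ as the graph $\{(p,h(p)) : p \in \Lambda\}$ of a section $h$ of $E^s$ of small norm. For a perturbation $u$ with $\|u-\eta\|_{C^1}$ small, $u$ is $C^1$-close to $\eta$ in these coordinates, so pushing the graph of $h$ forward by $u$ and reprojecting to the base $\Lambda$ yields a new section $\Gamma_u(h)$; this \emph{graph transform} is well defined (the image really is a graph over $\Lambda$) because the base diffeomorphism and the fibre contraction are only mildly perturbed. I would then show $\Gamma_u$ is a contraction on the complete metric space of bounded sections: displacements in the fibres are scaled by the normal contraction $\lambda + O(\epsilon) < 1$, giving a $C^0$-contraction, while the slope of the transformed graph is scaled by (normal contraction) times (inverse base expansion), i.e.\ by $\lambda\mu + O(\epsilon) < 1$. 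By the contraction mapping theorem there is a unique fixed section $h_u$, whose graph is the $u$-invariant submanifold $U$.

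Finally I would upgrade regularity and extract the quantitative closeness. Establishing that $U$ is $C^1$ rather than merely Lipschitz requires a second fixed-point argument, on the bundle of candidate tangent planes along $U$ (the fibre-contraction theorem), again driven by the spectral gap $\lambda < \mu^{-1}$; this is the delicate step and I expect it to be the main obstacle, since it is exactly where the strict rate inequalities are essential and where the $C^0$ estimates of the previous step must be promoted to $C^1$ estimates. Normal hyperbolicity and attractivity of $U$ for $u$ then persist automatically, because they are defined by strict, hence $C^1$-open, inequalities on the contraction and growth rates. Moreover the fixed point $h_u$ depends continuously (indeed Lipschitz-continuously) on $u$ with $h_\eta \equiv 0$, so the $C^1$-distance from $U$ to $\Lambda$ is controlled by $\|u-\eta\|_{C^1}$ and tends to $0$ with the perturbation; shrinking $\epsilon$ within the admissible range then secures $\|U - \Lambda\|_{C^1} < \epsilon$, completing the argument.
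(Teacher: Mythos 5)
The paper offers no proof of this theorem at all beyond a citation to Hirsch--Pugh--Shub's \emph{Invariant Manifolds}, and your graph-transform sketch is precisely the persistence argument contained in that reference (as you yourself note at the outset). You are therefore taking essentially the same route as the paper, merely unpacking the cited argument; the outline is correct, with the $C^1$-regularity step via the fibre-contraction theorem rightly flagged as the delicate point.
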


\begin{proof}
    \cite{Invariant_Manifolds}.
\end{proof}

With these preliminaries established we are ready to prove our ESN Approximation Theorem.
Our strategy involves imposing a special structure on the reservoir matrix $A$ in order to obtain sufficiently many neurons for the Random Universal Approximation Theorem to hold while controlling the dimension of the codomain of the Echo State Map. The structure of $A$ is made clear in the statement of the ESN Approximation Theorem and illustrated in
Figure~\ref{modified_autonomous_ESN}, where we call the connections represented by the matrix $A$ `strongly recurrent' and those represented by $X$ `weakly recurrent'.
The weakly recurrent neurons and the vector $Y$ of inputs are introduced in the proof of
the ESN Approximation Theorem in order to satisfy the conditions of the Random Universal Approximation Theorem. 

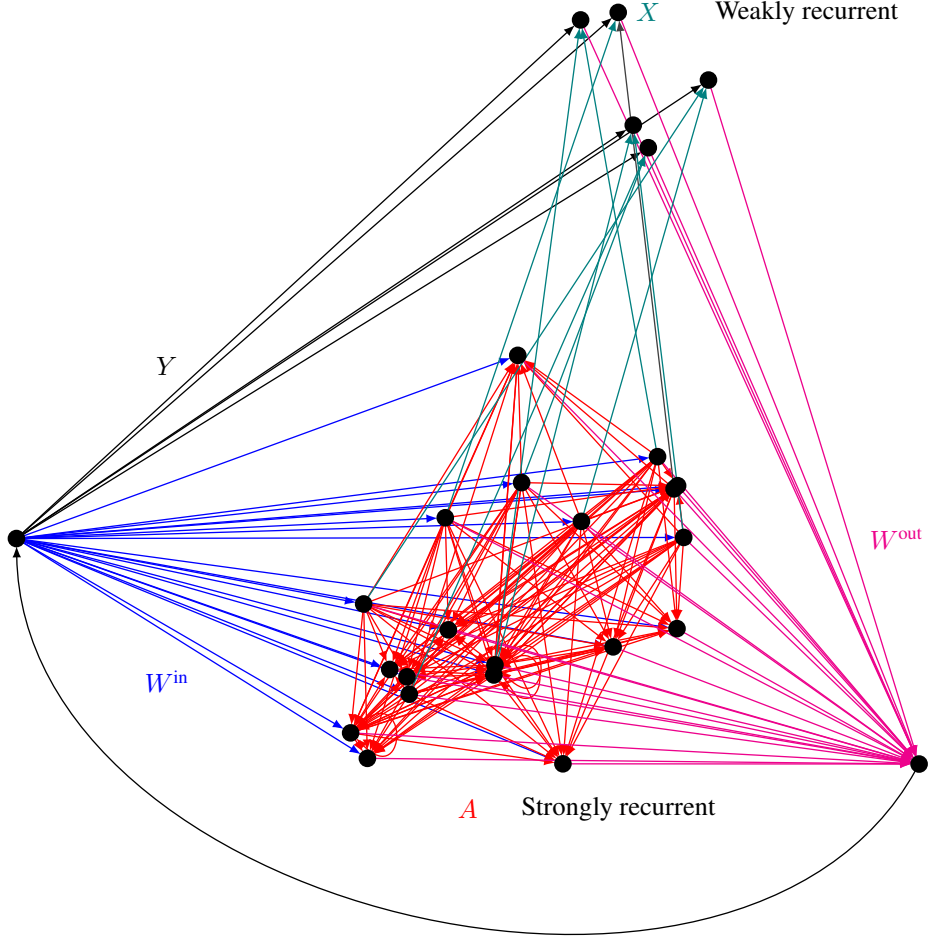
\begin{figure}
\centering
\begin{tikzpicture}
\pgfmathsetseed{1} 
\Vertex[x=-6, y=0, style={color=black},size=.1]{0}
\foreach \x in {1,2,...,20}{
\Vertex[x=rand*3,y=rand*3,style={color=black},size=.1]{\x}}
\foreach \x in {1,2,...,20}{
\Edge[color = blue,Direct,lw=.5pt](0)(\x)}
\foreach \x in {1,2,...,11}{
\foreach \y in {9,10,...,20}{
\Edge[color = red,Direct,lw=.5pt](\x)(\y)}
}
\Vertex[x=6, y=-3, style={color=black},size=.1]{21}
\foreach \x in {1,2,...,20}{
\Edge[color = magenta,Direct,lw=.5pt](\x)(21)}
\Edge[color = black,Direct,lw=.5pt,bend=75](21)(0)
\node[text=blue](Win) at (-4,-1.9){$W^{\text{in}}$};
\node[text=black](Y) at (-4,2.3){$Y$};
\node[text=teal](X) at (2.4,7.0){$X$};
\node[text=red](A) at (0,-3.6){$A$};
\node(SR) at (2,-3.6){Strongly recurrent};
\node(WR) at (4.5,7){Weakly recurrent};
\node[text=magenta](Wout) at (5.7,0){$W^{\text{out}}$};
\Vertex[x=2,y=7,style={color=black},size=.1]{22}
\Vertex[x=2.4,y=5.2,style={color=black},size=.1]{23}
\Vertex[x=3.2,y=6.1,style={color=black},size=.1]{24}
\Vertex[x=1.5,y=6.9,style={color=black},size=.1]{25}
\Vertex[x=2.2,y=5.5,style={color=black},size=.1]{26}
\foreach \x in {22,23,...,26}{
\Edge[color = black,Direct,lw=.5pt](0)(\x)}
\foreach \x in {22,23,...,26}{
\Edge[color = magenta,Direct,lw=.5pt](\x)(21)}
\Edge[color=,Direct,lw=.5pt](1)(22)
\Edge[color=teal,Direct,lw=.5pt](2)(23)
\Edge[color=teal,Direct,lw=.5pt](3)(24)
\Edge[color=teal,Direct,lw=.5pt](4)(25)
\Edge[color=teal,Direct,lw=.5pt](5)(26)
\Edge[color=teal,Direct,lw=.5pt](6)(22)
\Edge[color=teal,Direct,lw=.5pt](7)(23)
\Edge[color=teal,Direct,lw=.5pt](8)(24)
\Edge[color=teal,Direct,lw=.5pt](9)(25)
\Edge[color=teal,Direct,lw=.5pt](10)(26)
\end{tikzpicture}
\caption{The ESN with sparsity structure imposed on $A$ so that we can prove the ESN Approximation Theorem. The matrix $X$ and vector $Y$ are defined in the statement of the ESN Approximation Theorem.}
\label{modified_autonomous_ESN}
\end{figure}

\begin{defn}
    (ESN autonomous phase) The ESN autonomous phase with parameters $(A,W^{\text{in}},W^{\text{out}},\varphi)$ is a discrete time autonomous
    dynamical system $\psi \in C^1(\mathbb{R}^n)$ defined by
    \begin{align}
        \psi(s) = \varphi\big( (A + W^{\text{in}}W^{\text{out}} )s \big). \nn
    \end{align}
\end{defn}

\begin{theorem}
    (ESN Approximation Theorem) Let $M$ be a compact $m$-manifold and $n \in \mathbb{N}$ such that $n > 2m$. Let $A$ be an $n \times n$ matrix where
    $\lVert A \rVert_2 < \text{min}(1 / \lVert D\phi^{-1} \rVert_{\infty}, 1)$, and $W^{\text{in}}$ an $n \times 1$ matrix, and let the triple $(\varphi,A,W^{\text{in}})$ be an ESN. Let $\phi \in \text{Diff}^1(M)$ be structurally stable, and let $\omega \in C^1(M,\mathbb{R})$. Suppose the Echo State Map $f \in C^1( M ,\mathbb{R}^{n})$ is a $C^1$ embedding. Let $(x_j)_{j \in \mathbb{N}}$, $(y_j)_{j \in \mathbb{N}}$, and $(b_j)_{j \in \mathbb{N}}$ be sequences of i.i.d. $\mathbb{R}^n$, $\mathbb{R}$, and $\mathbb{R}$-valued random variables, respectively, with full support. Let $\alpha \in (0,1)$. Then, with probability $\alpha$, there exists $d \in \mathbb{N}$ with $d > n$, 
    a $d \times 1$ matrix $W^{\text{out}}$, a $d \times d$ matrix $\tilde{A}$, and a $d \times 1$ matrix
    $\tilde{W^{\text{in}}}$ assembled from the $n \times n$ matrix $A$, the $(d-n) \times n$ matrix $X$ with $j$th row $x_j$, and the $(d-n) \times 1$ matrix $Y$ with $j$th row $y_j$, like so:
        \begin{align}
        \tilde{A} = 
        \begin{bmatrix}
    A & 0 \\
    X & 0 
        \end{bmatrix}
%    \end{align}
\qquad \text{ and } \qquad
%    \begin{align}
        \tilde{W}^{\text{in}} =
\begin{bmatrix}
    W^{\text{in}} \\
    Y
\end{bmatrix} \nn,
    \end{align}
    and an activation function
    \begin{align}
        \tilde{\varphi}_i(r) = \sigma(r_i + b_i) \quad  \forall \ i \in \{1 , ... , d \} \nn
    \end{align}
    such that the autonomous ESN $\psi \in C^1(\mathbb{R}^d)$ with parameters $(\tilde{A},\tilde{W^{\text{in}}},W^{\text{out}},\tilde{\varphi})$
    has a normally hyperbolic attracting submanifold on which $\psi$ is topologically conjugate to $\phi$.
    \label{ESN_approximation_theorem}
\end{theorem}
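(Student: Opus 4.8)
The plan is to exhibit the autonomous ESN $\psi$ as a small $C^1$ perturbation of an idealised diffeomorphism carrying $\tilde f(M)$ as a normally hyperbolic attracting submanifold, and then to feed this into the Invariant Manifold Theorem and the structural stability of $\phi$. The first thing to exploit is the block structure. Since the right-hand blocks of $\tilde A$ vanish, the recursion of Definition~\ref{def:ESF} for the enlarged ESN $(\tilde\varphi,\tilde A,\tilde W^{\text{in}})$ decouples: its Echo State Map $\tilde f:M\to\mathbb{R}^d$ has first $n$ components equal to the given $f$, and remaining components the explicit feedforward features $\tilde f^{(2)}_j=\sigma\big(x_j^\top(f\circ\phi^{-1})+y_j\,\omega+b_{n+j}\big)$. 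Thus $\tilde f$ exists with no hypothesis on $\lVert X\rVert$, and because its first $n$ coordinates already embed $M$ it is itself a $C^1$ embedding. Comparing $\psi(\tilde f(x))=\tilde\varphi(\tilde A\tilde f(x)+\tilde W^{\text{in}}W^{\text{out}}\tilde f(x))$ with the identity $\tilde f(\phi(x))=\tilde\varphi(\tilde A\tilde f(x)+\tilde W^{\text{in}}\omega(\phi(x)))$ from Theorem~\ref{recursion_relation}, I see that the sole requirement on the readout is $W^{\text{out}}\tilde f\approx\omega\circ\phi$ in $C^1(M)$: then $\psi$ nearly fixes $\tilde f(M)$ and acts there as $\tilde f\circ\phi\circ\tilde f^{-1}$.

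I would produce such a readout from the Random Universal Approximation Theorem (Theorem~\ref{blehh}). On $\tilde f(M)$ the $j$-th weakly recurrent neuron is precisely the random single-layer feature $\sigma\big((x_j,y_j)^\top p(x)+b_{n+j}\big)$ of the auxiliary map $p:=(f\circ\phi^{-1},\omega):M\to\mathbb{R}^{n+1}$, which is an embedding because $f\circ\phi^{-1}$ is. Hence the target factors as $\omega\circ\phi=H\circ p$ with $H:=\omega\circ\phi\circ p^{-1}\in C^1(p(M))$, extended to a cube. As $(x_j,y_j,b_{n+j})$ are i.i.d. of full support, Theorem~\ref{blehh} supplies, for any prescribed tolerance and any $\alpha\in(0,1)$, a number $d-n$ of weakly recurrent neurons and readout weights $W^{\text{out}}$ on the weakly recurrent block for which, with probability exceeding $\alpha$, the induced $W^{\text{out}}\tilde f=\big(\sum_j w_{n+j}\sigma((x_j,y_j)^\top\,\cdot\,+b_{n+j})\big)\circ p$ is $C^1$-close to $\omega\circ\phi$.

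With the readout fixed, Lemma~\ref{manifold_lemma} applied to the embedding $\tilde f$ furnishes a diffeomorphism $\eta$ on a neighbourhood of $\tilde f(M)$ with $\tilde f(M)$ a normally hyperbolic attracting submanifold and $\eta|_{\tilde f(M)}=\tilde f\circ\phi\circ\tilde f^{-1}$. Once $\lVert\psi-\eta\rVert_{C^1}$ falls below the threshold in the Invariant Manifold Theorem, $\psi$ inherits a normally hyperbolic attracting submanifold $U$ that is $C^1$-close to $\tilde f(M)$; the restriction $\psi|_U$ is then $C^1$-close to $\tilde f\circ\phi\circ\tilde f^{-1}$, which is conjugate to $\phi$ through $\tilde f$. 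Because $\phi$ is structurally stable, $\psi|_U$ is topologically conjugate to $\phi$, as claimed.

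The main obstacle is exactly the passage from ``$\psi$ nearly fixes $\tilde f(M)$'' to the global $C^1$ estimate $\lVert\psi-\eta\rVert_{C^1}<\epsilon$ that the Invariant Manifold Theorem requires, together with genuine normal attractivity. Two entangled difficulties arise. First, $\psi$ depends on the weakly recurrent coordinates only through the scalar $v=W^{\text{out}}s$, so $D\psi$ is rank-deficient and $\psi$ is not a diffeomorphism; the reference map $\eta$ must be chosen, using the freedom in the transverse directions left open in the construction behind Lemma~\ref{manifold_lemma}, to reproduce this one-step collapse rather than a generic contraction, so that $\psi$ and $\eta$ are close away from the submanifold and not merely on it. Second, and more delicate, the readout feeds back into the weakly recurrent neurons through $Y$, so a transverse perturbation that moves $v$ is amplified by a factor of order $\lVert Y\rVert\,\lVert W^{\text{out}}\rVert$; since the universal-approximation weights need not be small, one must show this normal loop gain does not turn $\tilde f(M)$ into a saddle, i.e. that attractivity survives. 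Controlling this gain --- presumably by combining the $\lVert A\rVert_2<1$ contraction of the strongly recurrent block with the immediate collapse of the weakly recurrent block --- is the step on which the theorem really rests.
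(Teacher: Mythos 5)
Your proposal is essentially the paper's own proof: the same block-decoupling argument showing $\tilde f$ is an embedding, the same reduction of the problem to making $W^{\text{out}}\tilde f$ $C^1$-close to $\omega\circ\phi$, the same invocation of the Random Universal Approximation Theorem through an auxiliary embedding (the paper's $y=(\omega,\,f\circ\phi^{-1})$ is your $p$ up to coordinate order), and the same endgame via Lemma~\ref{manifold_lemma}, the Invariant Manifold Theorem, and structural stability of $\phi$. The one step you flag as unresolved --- upgrading the estimate $\lVert \psi\rvert_{\tilde f(M)} - \eta\rvert_{\tilde f(M)}\rVert_{C^1}<\epsilon$ on the $m$-dimensional submanifold to the estimate on a full $d$-dimensional neighbourhood that the Invariant Manifold Theorem actually requires, with control of the transverse derivatives and of the normal loop gain through $Y$ and $W^{\text{out}}$ --- is precisely the step the paper dispatches with the unsupported assertion ``hence there is some open set $\tilde\Omega\subset K$ containing $\tilde f(M)$ such that $\lVert\psi\lvert_{\tilde\Omega}-\eta\lvert_{\tilde\Omega}\rVert_{C^1}<\epsilon$,'' so your account is a more candid rendering of the same argument rather than a gap relative to it.
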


\begin{proof}
    By assumption, the Echo State Map $f$ defined for the ESN $(\varphi,A,W^{\text{in}})$ with respect to $(\phi,\omega)$ is an embedding, so the Echo State Map $\tilde{f}$ defined for $(\tilde{\varphi},\tilde{A},\tilde{W^{\text{in}}})$ with respect to $(\phi,\omega)$ is also an embedding. For the remainder of the proof we will restrict the codomain of $\tilde{f}$ to its image in order to yield a $C^1$ diffeomorphism.
    Before we proceed, we will establish some preliminary results. First we define
     $y : M \to y(M) \subset \mathbb{R}^{n+1}$ by
     \begin{align}
        y_1(x) = \omega(x)
%    \end{align}
   \qquad \text{ and } \qquad 
%    \begin{align}
            \begin{bmatrix}
            y_2(x) \\
            y_3(x) \\
            \vdots \\
            y_{n+1}(x)
        \end{bmatrix}
        = f \circ \phi^{-1}(x) \nn.
    \end{align}
    Furthermore we will define maps 
    \begin{align}
        \mathcal{F}:C^1(M,\mathbb{R}^d) \to C^1(\tilde{f}(M),\mathbb{R}^d)
    \qquad \text{ by } \qquad 
    \mathcal{F}(g) = g \circ \tilde{f}^{-1} \nn
    \end{align}
    and
    \begin{align}
    \mathcal{Y}:C^1(y(M),\mathbb{R}) \to C^1(M,\mathbb{R}) 
    \qquad \text{ by } \qquad 
    \mathcal{Y}(g) = g \circ y. \nn
    \end{align}
    Next we will show that $\mathcal{F}$ and $\mathcal{Y}$ are Lipschitz continuous. To see that $\mathcal{F}$ is Lipschitz continuous observe
    \ba
        \lVert \mathcal{F}(g) - \mathcal{F}(h) \rVert_{C^1} & = &
        \lVert g \circ \tilde{f}^{-1} - h \circ \tilde{f}^{-1} \rVert_{C^1} \nn \\
        & = & \lVert g \circ \tilde{f}^{-1} - h \circ \tilde{f}^{-1} \rVert_{\infty} + \lVert Dg \circ \tilde{f}^{-1}D\tilde{f}^{-1} - Dh \circ \tilde{f}^{-1} D\tilde{f}^{-1} \rVert_{\infty} \nn \\
        & \leq & \lVert g \circ \tilde{f}^{-1} - h \circ \tilde{f}^{-1} \rVert_{\infty} + \lVert D\tilde{f}^{-1} \rVert_{\infty} \lVert Dg \circ \tilde{f}^{-1} - Dh \circ \tilde{f}^{-1} \rVert_{\infty} \nn \\
        & = & \lVert g - h \rVert_{\infty} + \lVert D\tilde{f}^{-1} \rVert_{\infty}\lVert Dg - Dh \rVert_{\infty} \nn \\
        & \leq & \text{max}(1,\lVert D\tilde{f}^{-1} \rVert_{\infty})( \lVert g - h \rVert_{\infty} + \lVert Dg - Dh \rVert_{\infty} ) \nn \\
        & = & \text{max}(1,\lVert D\tilde{f}^{-1} \rVert_{\infty}) \lVert g - h \rVert_{C^1} \nn.
    \ea
    We can make an almost identical argument to show that $\mathcal{Y}$ is Lipschitz continuous. We will denote the Lipschitz constants for $\mathcal{F}$ and $\mathcal{Y}$ by $L$ and $M$ respectively. We are now ready to proceed with the proof.
    
     By Lemma \ref{manifold_lemma}, there exists an open subset $\Omega \in \mathbb{R}^d$ containing $\tilde{f}(M)$ and $\eta \in \text{Diff}^1(\Omega)$ with $\tilde{f}(M)$ a normally hyperbolic attracting submanifold such that
    \begin{align}
        \eta\rvert_{\tilde{f}(M)} = \tilde{f} \circ \phi \circ \tilde{f}^{-1}. \nn 
    \end{align}
    Now let $K \subset \Omega$ be a compact manifold containing $\tilde{f}(M)$.
   Normally hyperbolic invariant submanifolds persist under small perturbations, by the Invariant Manifold Theorem, so $\exists \ \epsilon > 0$ such that any $u \in \text{Diff}^1(K)$ which satisfies
%    \begin{align}
        $\lVert u - \eta\rvert_K \rVert_{C^1} < \epsilon$
%    \end{align}
    is topologically conjugate to $\eta$. For any given value $\alpha \in (0,1)$, by the Random Universal Approximation Theorem, there exists a $d \in \mathbb{N}$ and a $d \times 1$ matrix $W^{\text{out}}$ such that $g \in C^1(\mathbb{R}^{n+1},\mathbb{R})$ defined by
    \begin{align}
        g(z) = \sum_{i = 1}^d W^{\text{out}}_i \sigma\bigg(  
        \begin{bmatrix}
            \tilde{W}^{\text{in}} & \tilde{A} 
        \end{bmatrix}
        _i z
        + b_i \bigg) \label{defn_g}
    \end{align}
     satisfies
     \begin{align}
         \rVert g - \omega \circ \phi \circ y^{-1} \lVert_{C^1} < \frac{\epsilon}{LM\lVert W^{\text{in}} \rVert} \label{g<eps}
     \end{align}
     where
     $[\tilde{W}^{\text{in}},\tilde{A}]_i$ is a $1 \times (n+1)$ matrix with 1st entry $\tilde{W}^{\text{in}}_i$ and $(j+1)$th entry $\tilde{A}_{ij}$.
     Now
    \ba
            \lVert \psi\rvert_{\tilde{f}(M)} - \eta\rvert_{\tilde{f}(M)} \rVert_{C^1}
            & \leq & L\lVert \psi \circ\tilde{f} - \eta \circ \tilde{f} \rVert_{C^1} \nn
            \\ 
            & = & L \lVert \psi \circ \tilde{f} - \tilde{f} \circ \phi \rVert_{C^1} \text{ because $\eta\rvert_{\tilde{f}(M)} = \tilde{f} \circ \phi \circ \tilde{f}^{-1}$} \nn  \\
            & = & L \lVert \tilde{\varphi}(\tilde{A}\tilde{f} + \tilde{W}^{\text{in}}W^{\text{out}}\tilde{f}) - \tilde{\varphi}(\tilde{A}\tilde{f} + \tilde{W}^{\text{in}}\omega\circ\phi) \rVert_{C^1} \text{ by definition of $\psi$} \nn  \\
            & \leq & L \lVert (\tilde{A}\tilde{f} + \tilde{W}^{\text{in}}W^{\text{out}}\tilde{f}) - (\tilde{A}\tilde{f} + \tilde{W}^{\text{in}}\omega\circ\phi) \rVert_{C^1} \text{ because $\tilde{\varphi}$ is contracting} \nn \\
            & = & L \lVert ( \tilde{W}^{\text{in}}W^{\text{out}}\tilde{f}) - (\tilde{W}^{\text{in}}\omega\circ\phi) \rVert_{C^1} \text{ because $\tilde{A}\tilde{f} - \tilde{A}\tilde{f} = 0$} \nn \\
            & \leq & L \lVert \tilde{W}^{\text{in}} \rVert_2 \lVert W^{\text{out}}\tilde{f} - \omega\circ\phi) \rVert_{C^1} \text{ by factoring out $W^{\text{in}}$}  \nn \\
             & = & L \lVert \tilde{W}^{\text{in}} \rVert_2 \lVert W^{\text{out}}\tilde{\varphi}(\tilde{A}\tilde{f} \circ \phi^{-1} + \tilde{W}^{\text{in}} \omega) - \omega \circ \phi \rVert_{C^1} \text{ by Theorem~\ref{recursion_relation}} \nn \\
            & = & L \lVert \tilde{W}^{\text{in}} \rVert_2 \bigg\lVert \sum_{i = 1}^d W^{\text{out}}_i \sigma\bigg(  
        \begin{bmatrix} \tilde{W}^{\text{in}} & \tilde{A} 
        \end{bmatrix}
        _i y
        + b_i \bigg) - \omega \circ \phi \bigg\rVert_{C^1} \text{ by definition of $\tilde{\varphi}$ and $y$} \nn \\
        & = & L \lVert \tilde{W}^{\text{in}} \rVert_2 \lVert g \circ y - \omega \circ \phi \rVert_{C^1} \text{ by \eqref{defn_g}} \nn \\
        & \leq & LM \lVert \tilde{W}^{\text{in}} \rVert_2 \lVert g\lvert_{y(M)} - \omega \circ \phi \circ y^{-1}
    \rVert_{C^1} \nn \\
    & < & LM \lVert \tilde{W}^{\text{in}} \rVert_2 \frac{\epsilon}{LM\lVert \tilde{W}^{\text{in}} \rVert_2} \nn \text{ by \eqref{g<eps}} \\
    & = & \epsilon \nn
    \ea
hence there is some open set $\tilde{\Omega} \subset K$ containing $\tilde{f}(M)$ such that
    \begin{align}
        \lVert \psi\lvert_{\tilde\Omega} - \eta\lvert_{\tilde\Omega} \rVert_{C^1} < \epsilon \nn 
    \end{align}
    so $\psi\lvert_{\tilde\Omega}$ is conjugate to $\eta\lvert_{\tilde\Omega}$. Consequently, there exists an $h \in \text{Diff}^1(\tilde\Omega)$ such that $\psi\lvert_{\tilde\Omega} = h \circ \eta\lvert_{\tilde\Omega} \circ h^{-1}$. Now $\tilde{f}(M)$ is a normally hyperbolic attracting submanifold of $\eta$ where $\eta\lvert_{\tilde{f}(M)} = \tilde{f} \circ \phi \circ \tilde{f}^{-1}$ so $h \circ \tilde{f}(M)$ is a normally hyperbolic attracting submanifold of $\psi$ on which
    \begin{align}
        \psi = h \circ \eta \circ h^{-1} = h \circ \tilde{f} \circ \phi \circ \tilde{f}^{-1} \circ h^{-1} \cong \phi. \nn 
    \end{align}
\end{proof}
\begin{remark}
    A consequence of the ESN Approximation Theorem is that the diagram shown in Figure~\ref{Commuting_diagram} commutes.
    \begin{figure}
  \centering
        \begin{tikzcd}
        M \arrow{d}{\phi} \arrow{r}{\tilde{f}} & \tilde{f}(M) \arrow{d}{\eta} \arrow{r}{h} & h \circ \tilde{f}(M) \arrow{d}{\psi} \\
        M \arrow{r}{\tilde{f}} & \tilde{f}(M) \arrow{r}{h} & h \circ \tilde{f}(M)
\end{tikzcd}
    \caption{A commuting diagram representing the ESN Approximation Theorem where the terms are defined throughout the theorem's proof.}
    \label{Commuting_diagram}
\end{figure}
\end{remark}
\section{Numerical Experiments with ESNs}
\label{sec::numerical}

In the previous section we showed that for a given structurally stable dynamical system and a sufficiently large ESN there exists a linear output matrix $W^{\text{out}}$ that gives rise to an autonomous ESN with dynamics that are topologically conjugate to those of the given dynamical system.

To test whether these results hold in practice we took a 1D observation of a numerically integrated trajectory of the Lorenz system, fed this into an ESN implemented on a commercial laptop, and sought to discover whether the autonomous phase of the ESN would adopt dynamics topologically conjugate to the Lorenz system. In particular we computed several topological invariants of the ESN autonomous phase including the Lyapunov exponents, fixed point eigenvalues, and homology, then compared these to the known invariants of the Lorenz system. This work was inspired by a paper by \cite{Pathak2017} who trained an ESN on a full 3D trajectory of the Lorenz system, rather than a 1D observation, and compared the Lyapunov exponents of the autonomous phase to the known exponents of the Lorenz system. In a more recent works, \cite{Vlac_2019} train an ESN on 1D observations of the Lorenz-96 system, and also compare the Lyapunov exponents of the autonomous phase to the known exponents of the Lorenz system. \cite{Ashesh_2019} also train an ESN on observations of the Lorenz-96 system and evaluate the accuracy of future prediction for reservoirs of different size.

We used MATLAB's ODE45 to integrate a trajectory of the \cite{doi:10.1175/1520-0469(1963)020<0130:DNF>2.0.CO;2} system
\begin{align}
    \dot{x} &= \sigma(y - x) \\
    \dot{y} &= x(\rho - z) - y \nonumber \\
    \dot{z} &= xy - \beta z \nonumber
\end{align}
with parameters $\sigma = 10, \beta = 8/3, \rho = 28$ chosen so the system produces the celebrated Lorenz attractor shown in Figure \ref{Lorenz_attractor_fig}.
\begin{figure}
  \centering
    \includegraphics[width=0.7\textwidth]{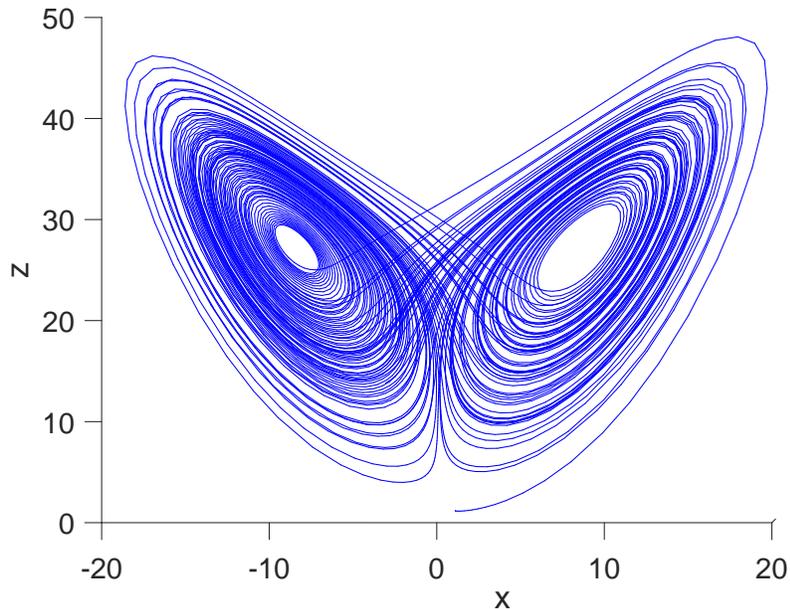}
    \caption{A picture of the famous Lorenz attractor. Here the trajectory was initialised at $(1,1,1)$ and quickly converges to the attractor.}
    \label{Lorenz_attractor_fig}
\end{figure}
We then observed the $x$ component of the trajectory by choosing the observation function $\omega(x,y,z) = x$ to create a 1 dimensional time series. We fed this time series into an ESN with the following parameters: spectral radius $\rho = 1$, reservoir size $n = 300$, and activation function $\varphi = \text{tanh}$. The reservoir matrix $A$ is an Erd\H{o}s-R\'{e}nyi matrix with mean $6$ and connection weights (where they are non-zero) i.i.d Gaussian, re-scaled such that $\rho = 1$. The keen reader will notice that the structure of $A$ does not conform to the reservoir matrix $\tilde{A}$ described in the statement of the ESN Approximation Theorem. The fact that our numerical experiments produce good results despite this suggests this weakly connected $\tilde{A}$ is unnecessary, but rather a decision we made to make the ESN Approximation Theorem easier to prove. Furthermore, insisting that $\rho < 1$ is not sufficient in to ensure that $\lVert A \rVert_2 < 1$, but this is a common choice in practical applications. The matrix $W^{\text{out}}$ is populated with i.i.d Gaussian weights $\sim\mathcal{N}(0,1)$ which 
are then scaled by a `strength parameter' $p = 0.1$. We choose a regularisation parameter $\lambda = 10^{-6}$ to solve the regularised least squares problem 
\begin{align}
    \min_{W^{\text{out}}}\sum_{k=1}^{K} \lVert W^{\text{out}}r_{k} - u_{k} \rVert^2 + \lambda\lVert W^{\text{out}} \rVert^2_2 \nn
\end{align}
using the SVD method
presented by \cite{Deblurring_2006}.
We will note here that the linear output layer $W^{\text{out}}$ obtained by this procedure is not necessarily the same as that guaranteed by the ESN Approximation Theorem.
These parameters were
carefully hand tuned so that the autonomous phase appeared (by eye) to match the driven phase. The question of how to systematically choose good parameters is discussed by \cite{Bayesian_Optimisation_of_ESN} who searched through parameter space using Bayesian optimisation, and used cross validation to test the goodness of fit. Now, 
with $W^{\text{out}}$ obtained, we ran the autonomous ESN and plotted the future observations $v_i$ in Figure \ref{x_v_t_fig}. We can see from this Figure that the ESN seems to predict the qualitative features of the future trajectory very well. 
\begin{figure}
  \centering
    \includegraphics[width=1\textwidth]{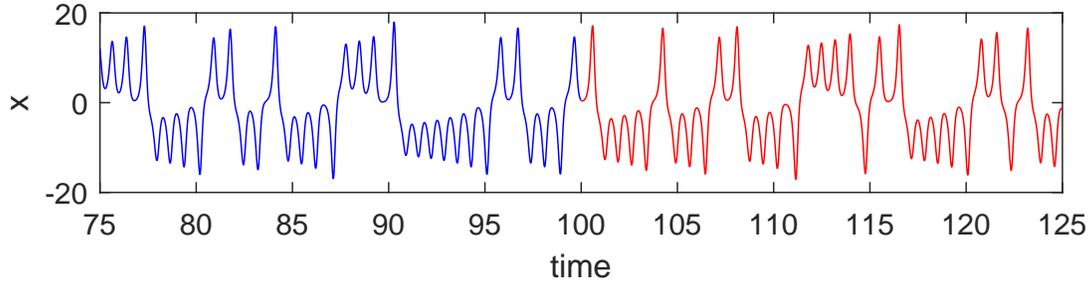}
      \caption{Here the 1D observations are shown in blue (up to time 100 for those of you reading in black and white) and future predictions shown in red (onwards from time 100).}
      \label{x_v_t_fig}
\end{figure}

Since the Lorenz system is defined on a 3-manifold, we can usefully plot trajectories of the entire system. To check by eye whether the reservoir dynamics of both the driven phase and autonomous phase are topologically conjugate to the Lorenz dynamics, we projected the driven and autonomous dynamics onto the first 3 principal components of the driven trajectory and present them in Figure \ref{fixed_point_full.fig}.

\subsection{Locating Fixed Points and Determining their Eigenvalues}

If the ESM $f$ is an embedding, then $f$ will embed the fixed points of the Lorenz system into the reservoir space. Moreover if the autonomous ESN approximates the embedded Lorenz system on a neighbourhood of the embedded fixed points sufficently well, the autonomous dynamics will contain fixed points very close to those of the embedded Lorenz system. To verify this, we searched for the autonomous ESN's fixed points using Newton's method, and found them, as
illustrated in Figure~\ref{fixed_point_full.fig}.

\begin{figure}
  \centering
    \includegraphics[width=0.7\textwidth]{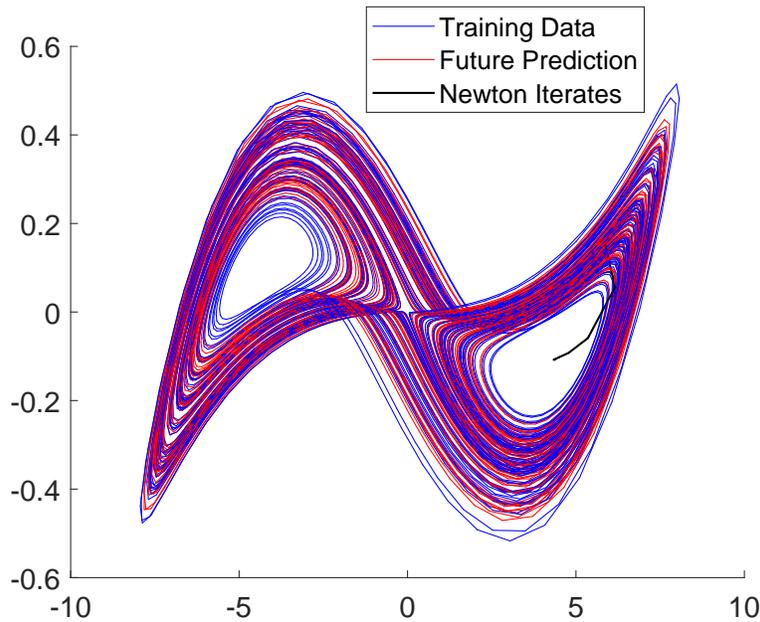}
    \caption{The driven reservoir dynamics are plotted in blue and autonomous dynamics are plotted in red. Both were projected onto the first three principal components of the driven dynamics, then the axes are rotated such that the projection appears on the first 2 components. The black line indicates the iterates of Newton's method, used to locate a fixed point - the method eventually converges to a fixed point in the middle of the right wing of the figure. We can see by eye that the reservoir dynamics appear by eye to be topologically conjugate to the Lorenz system.}
    \label{fixed_point_full.fig}
\end{figure}

Further, if the ESM $f$ is a $C^1$ embedding of the original dynamics, we expect $f$ to preserve the stability of fixed points, i.e. we expect the eigenvalues of the linearisation of the autonomous phase to be preserved at every fixed point. Now, comparing the eigenvalues of the linearisation of the Lorenz system and autonomous phase at the respective fixed points requires some subtlety, because the Lorenz system is a continuous time flow, while the autonomous phase is a discrete time map. So, we began by considering one of the known fixed points found in the Lorenz attractor's wings
\begin{align}
    x^* = (\sqrt{\beta(\rho - 1)},\sqrt{\beta(\rho-1)},\rho - 1), \nn
\end{align}
and noted the Jacobian $J$ of the continuous time Lorenz system evaluated at the fixed point $x^*$ is therefore
\begin{align}
J\Big|_{x^*} =
\begin{bmatrix}
    -\sigma & \sigma & 0 \\
    1 & -1 & -\sqrt{\beta(\rho - 1)} \\
    \sqrt{\beta(\rho-1)} & \sqrt{\beta(\rho-1)} & -\beta
\end{bmatrix}. \nn
\end{align}
Now we can discretise the Lorenz system $\dot{x} = s(x)$ with the following map
\begin{align}
    x_{k+1} = x_k + \int_{t_k}^{t_{k+1}} s \circ x(t) dt,
    \nn %\label{dt}
\end{align}
hence the discrete time linearisation about the fixed point $x^*$ is
\begin{align}
    x_{k+1} = \exp\bigg(J\Big|_{x^*}(t_{k+1}-t_k)\bigg)x_k, \nn
\end{align}
which has 3 eigenvalues, which we have compared with the ESN autonomous eigenvalues in Figure \ref{eigenvalues-fig}.
\begin{figure}
  \centering
    \includegraphics[width=0.7\textwidth]{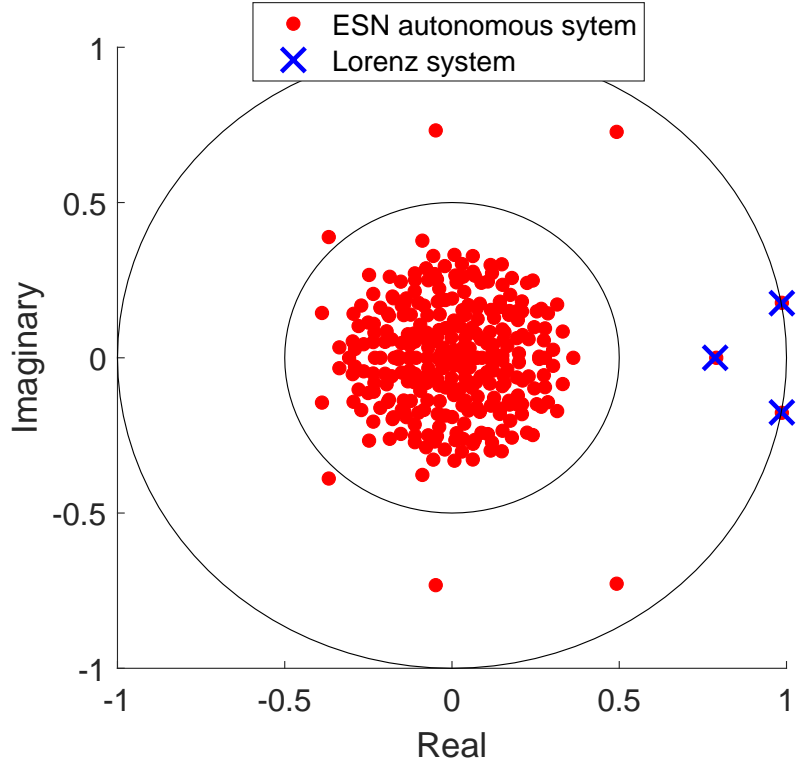}
      \caption{Here the 3 eigenvalues of the linearisation of the Lorenz system on the fixed point inside one of the Lorenz attractor's wings are represented by blue crosses. The 300 eigenvalues of the linearisation of the ESN autonomous system at the fixed point found with Newton's method are represented by red dots.}
      \label{eigenvalues-fig}
\end{figure}
If the ESM $f$ is indeed a $C^1$ embedding, the dynamics of the autonomous phase are topologically conjugate to the discrete time Lorenz system on some 3-submanifold. This manifold is spanned by 3 eigenvectors, each with an associated eigenvalue, which will coincide with the eigenvalues of the linearisation of the Lorenz system on the fixed point. Figure \ref{eigenvalues-fig} appears to show 3 overlapping eigenvalues, suggesting that the autonomous phase is diffeomorphic to the Lorenz system (at least in a neighbourhood of $x^*$) in this simulation. This is particularly remarkable because $x^*$ is distant from the training data. The ESN has successfully inferred the existence, position and eigenvalues of a fixed point from training data, which contains no fixed points. In the machine learning parlance, the ESN has generalised patterns in the training data to an unseen region of the phase space.

\subsection{Comparison of Lyapunov Spectra}

Another topological invariant of the Lorenz system is the Lyapunov spectrum, which captures how quickly very close trajectories diverge from eachother, and is used as a measure of chaos. To define the spectrum, let $J$ be the Jacobian of the evolution operator of a continuous time dynamical system. Let $Y$ be the solution of the ODE $\dot Y = JY$ with initial condition $Y(0) = x_0$. Then the Lyapunov Spectrum of the invariant set containing $x_0$ is the spectrum of the matrix $\Lambda$ defined
    \begin{align}
        \Lambda = \lim_{t \to \infty} \frac{1}{2t} Y Y^{\top}. \nn
    \end{align}
 Each eigenvalue in the spectrum is called a \emph{Lyapunov exponent} to signify that two initially close trajectories diverge or converge exponentially fast with exponentiation constant in the direction of each eigenvector of $J$ given by a Lyapunov exponent. Details are discussed by \cite{DARBYSHIRE1996287}. The Lyapunov spectrum for the Lorenz system was estimated by \cite{Sprott_2003} as 0.9056, 0, -14.5723. In order to compare the Lorenz spectrum to the spectrum of the autonomous ESN, we computed the autonomous system's spectrum using the discrete time $QR$ method discussed in \cite{DARBYSHIRE1996287} and plotted each Lyapunov exponent against the known exponents of the Lorenz system in Figure \ref{Lorenz_Lyapunov}. We found the largest 2 in good agreement while there was significant error in the smallest, which is a common problem also encountered by \cite{Pathak2017}.
\begin{figure}
  \centering
    \includegraphics[width=0.7\textwidth]{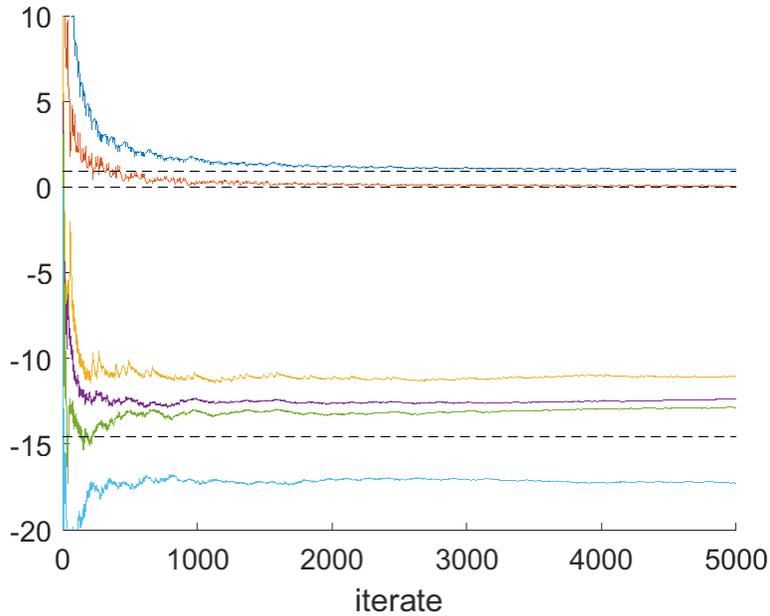}
    \caption{The Lyapunov spectrum of the autonomous phase as the iterates increases is shown. The true Lyapunov exponents of the autonomous phase is given by the limit of these exponents as the iterations tend to infinity. These autonomous exponents are compared to the black dotted lines representing the 3 exponents of the Lorenz system.}
    \label{Lorenz_Lyapunov}
\end{figure}

\subsection{Persistent Homology}

We compared the homology groups of the Lorenz attractor to the persistent homology groups of the autonomous and driven attractors.
We followed the lead of \cite{GARLAND201649} who computed the persistent homology of the Lorenz system reconstructed from a sequence of 1D observations of a Lorenz trajectory using the delay observation map described in Takens' Theorem. The authors used the open source software Javaplex created by \cite{Javaplex} to find the Witness Complex for the delay embedded Lorenz attractor and computed the homology of the complex. They discuss a few subtleties that arise, in particular that the Lorenz attractor is a fractal, whose structure cannot be reconstructed exactly from any finite number of sample points. The authors therefore satisfied themselves by approximating the Lorenz attractor with a branched manifold model presented by \cite{PMIHES_1979__50__73_0} which has the homology of the figure 8. We made the same approximation, and expected to find that the application of persistent homology to the Lorenz system, driven ESN dynamics, and autonomous ESN dynamics would reveal that all three have the figure 8 homology groups. In particular the persistence diagrams of these three systems would exhibit a pair of $H_1$ persistent homology groups floating well above the diagonal. To verify this, we produced persistence diagrams using the open source software Ripser produced by \cite{ctralie2018ripser} and plotted the results in Figure \ref{Persistence_fig}.

The reader may wonder why we would use persistent homology to show that the Lorenz system, driven ESN dynamics, and autonomous ESN dynamics all have the homology of the figure 8 when this can clearly be seen in Figures \ref{Lorenz_attractor_fig} and \ref{fixed_point_full.fig}.  The homology of a 3D system is usually apparent from a plot, but persistent homology can reveal the holes, voids and higher dimensional hypervoids of high dimensional systems that cannot be easily visualised. For example \cite{MULDOON19931} computed the homology of a delay embedded time series from a fluid dynamics experiment, which could in general be of much higher dimension.

\begin{figure}
  \centering
    \includegraphics[width=0.7\textwidth]{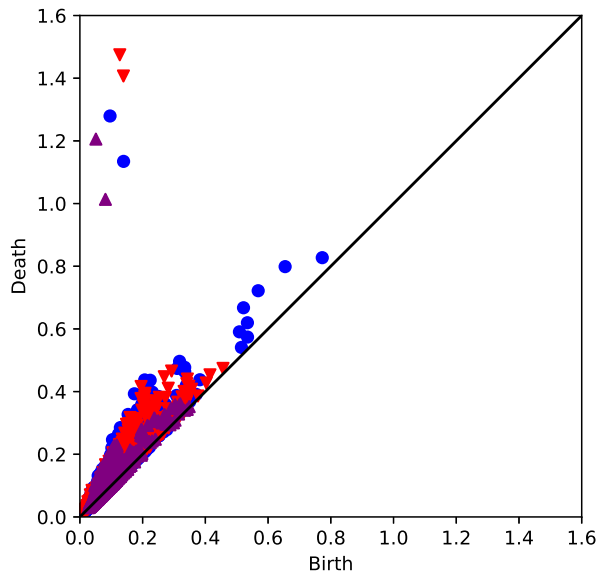}
    \caption{We have plotted the $H_1$ persistence diagrams of the driven ESN dynamics, autonomous ESN dynamics, and Lorenz dynamics as blue circles, red downward triangles, and purple upward triangles. We can see that each of these 3 objects has a pair of points floating well above the diagonal, suggesting each has 2 holes. This is consistent with our expectation that all three adopt the topology of the figure 8.}
    \label{Persistence_fig}
\end{figure}

\section{Conclusions and Outlook}

In this paper, we showed that an Echo State Network driven by a sequence of one dimensional observations of a dynamical system, evolving on a manifold $M$, induces a map $f \in C^1(M,\mathbb{R}^n)$, which we called the Echo State Map. We proved that for a randomly initialed ESN and generic observation function $\omega$, that $f$ is an embedding with positive probability, and called this the weak ESN Embedding Theorem. We conjectured that the theorem holds with probability 1, by analogy to Takens' Theorem. We went on to show that a randomly initialised ESN has a universal approximation property and called this the Random Universal Approximation Theorem (RUAT). Finally, we used both the RUAT and Embedding Theorem to prove that for an ESN trained a sequence of scalar observations of a structurally stable dynamical system, there is a choice of linear readout weights $W^{\text{out}}$ for which the autonomous ESN has dynamics that are topologically conjugate to the input dynamical system, and we called this the ESN Approximation Theorem.

The theory presented here leaves some questions unanswered. 
In practice we use regularized least squares regression to learn an output matrix from the one-dimensional and finite training trajectory, but currently, we have no guarantee that this will result in an autonomous phase ESN that is topologically conjugate to the underlying dynamical system. This is analogous to the case of the Universal Approximation Theorem for feed forwards neural networks, where the theoretical result proves the existence of suitable set of weights but does not guarantee that a particular learning algorithm will be able to find them or how much training data may be required. It may be that imposing extra conditions on the target dynamical system, like ergodicity, allows us to prove that $W^{\text{out}}$ obtained by least squares regression results in an arbitrarily good approximation. This seems to be supported by the experiments in Section \ref{sec::numerical}.

Furthermore, it seems worthwhile to prove the ESN Embedding Conjecture, or some modification of it that is actually correct, by carefully modifying the proof of Takens' Theorem provided by \cite{Hukes_thm}. A sceptical reader may wonder why we would bother using an ESN to embed the trajectory in the first place, when a delay embedding would do. The reason being that it seems the ESN's learning and predictive powers are much more resilient to noise than the simple delay embedding presented by Takens. Heuristically it seems as an observed trajectory passes through the ESN, the noise cancels itself out by taking a nonlinear combination of positive and negative noise. We could therefore view the ESN as a nonlinear filter, generalising the linear filters discussed by \cite{Sauer1991} in the context of \emph{embedology} - the art building delay observation maps with special features, which include being more resiliant to noise than Takens' original map. Understanding the noise cancelling benefits of the ESN could be a fruitful direction of future work. 

Many of the assumptions we made throughout this paper are likely stronger than they need to be. For example \cite{Sauer1991} prove versions of Takens' Theorem for dynamics on a compact invariant set with real box counting dimension - generalising dynamics on a manifold with integer dimension. This is particularly worthwhile because chaotic attractors of interest often lie on invariant sets with non-integer dimension, with the Lorenz attractor serving as a perfect example. We also create a strangely shaped reservoir $\tilde{A}$ in our proof of the ESN Approximation Theorem, which numerical experiments suggests is unnecessary.

\section*{Acknowledgements}
We offer gracious thanks to the anonymous reviewers for their detailed suggestions. In particular, we thank Juan-Pablo Ortega and Lyudmila Grigoryeva for thorough and fruitful discussions at Universit{\"a}t Sankt Gallen which lead to significant improvements to the manuscript. 
\\
\\
We also thank Allen Hart's PhD confirmation examiners Alastair Spence and Chris Guiver who helpfully criticised and improved the content of this paper. 
\\
\\
We acknowledge that Allen Hart is supported by a scholarship from the ESPRC Centre for Doctoral Training in Statistical Applied Mathematics at Bath (SAMBa), under the project EP/L015684/1.

    \bibliographystyle{agsm}
\bibliography{references}

\end{document}